\newtheorem{theorem}{Theorem}%[section]
\newtheorem{lemma}[theorem]{Lemma}
\newtheorem{corollary}[theorem]{Corollary}
\theoremstyle{definition}
\newtheorem{assumption}[theorem]{Assumption}
\theoremstyle{remark}
\newtheorem{remark}[theorem]{Remark}
\newcommand{\1}{\mathbbm{1}}
\renewcommand{\epsilon}{\varepsilon}
\renewcommand{\phi}{\varphi}
\newcommand{\R}{\mathbb{R}}
\newcommand{\Sph}{\mathbb{S}}
\newcommand{\cL}{\mathcal{L}}
\newcommand{\lesim}{\lesssim} 
\newcommand{\gesim}{\gtrsim}
\DeclareMathOperator{\dom}{dom}
\DeclareMathOperator{\supp}{supp}
\title[CLR type inequalities for Hardy--Schr\"odinger operator]{Cwikel--Lieb--Rozenblum type inequalities\\ for Hardy--Schr\"odinger operator}
\author[G.K. Duong]{Giao Ky Duong}
\address[G.K. Duong]{Department of Mathematics, LMU Munich, and Munich Center for Quantum Science and Technology (MCQST), Germany} 
\email{duong@math.lmu.de}
\author[R.L. Frank]{Rupert L. Frank}
\address[R.L. Frank]{Department of Mathematics, LMU Munich, and Munich Center for Quantum Science and Technology (MCQST), Germany, and Department of Mathematics, Caltech, USA}
\email{r.frank@lmu.de}
\author[T.M.T. Le]{Thi Minh Thao Le}
\address[T.M.T. Le]{Department of Mathematics and Statistics, Masaryk University, Brno, Czech Republic}
\email{tmtle@math.muni.cz}
\author[P.T. Nam]{Phan Th\`anh Nam}
\address[P.T. Nam]{Department of Mathematics, LMU Munich, and Munich Center for Quantum Science and Technology (MCQST), Germany} 
\email{nam@math.lmu.de}
\author[P.T. Nguyen]{Phuoc-Tai Nguyen}
\address[P.T. Nguyen]{Department of Mathematics and Statistics, Masaryk University, Brno, Czech Republic}
\email{ptnguyen@math.muni.cz}
\begin{document}

\maketitle

\begin{abstract} We prove a Cwikel--Lieb--Rozenblum type inequality for the number of negative eigenvalues of the Hardy--Schr\"odinger operator $-\Delta - (d-2)^2/(4|x|^2) -W(x)$ on $L^2(\mathbb{R}^d)$. The bound is given in terms of a weighted $L^{d/2}-$norm of $W$ which is sharp in both large and small coupling regimes. We also obtain a similar bound for the fractional Laplacian.

%\medskip \noindent
%{\sc Résumé.} Nous prouvons une inégalité de type Cwikel--Lieb--Rozenblum pour le nombre de valeurs propres négatives de l'opérateur de Hardy--Schrödinger $-\Delta - (d-2)^2/(4|x|^2) - W(x)$ sur $L^2(\mathbb{R}^d)$. La borne est donnée en termes d'une norme pondérée $L^{d/2}$ de $W$, qui est optimale à la fois dans les régimes de couplage fort et faible. Nous obtenons également une borne similaire pour le Laplacien fractionnaire.
%
%\medskip \noindent
%{\sc Keywords:} Schr\"odinger operator, semiclassical estimates, Cwikel--Lieb--Rozenblum inequality, singular potentials
%
%\medskip \noindent
%{\sc Mathematics Subject Classification 2020:} 81Q20

\end{abstract}

\section{Introduction and Main Results}

The celebrated Cwikel--Lieb--Rozenblum (CLR) inequality \cite{Cwikel-77,Lieb-76,Rozenblum-76} states that for all dimensions $d\ge 3$, the number of negative eigenvalues of the Schrödinger operator $-\Delta  -V$ on $L^2(\mathbb{R}^d)$, with a real-valued potential $V\in L^{d/2}(\R^d)$, satisfies 
\begin{align}\label{eq:CLR}
N(0, -\Delta-V) \lesssim_d \int_{\R^d} V(x)_+^{d/2} {\rm d} x
\end{align}
where $V(x)_+=\max(V(x),0)$. Here the notation $\lesssim_d$ means that the implicit constant on the right hand side depends only on the dimension $d$. In particular, since $N(0, -\Delta-V)$ is always an integer, \eqref{eq:CLR} implies that $N(0,-\Delta -V)=0$ if $\|V_+\|_{L^{d/2}(\R^d)}$ is small enough, which can be deduced from Sobolev's inequality
\begin{align}\label{eq:Sobolev}
\int_{\R^d} |\nabla u(x)|^2 {\rm d} x \ge S_d \left( \int_{\R^d} |u(x)|^{\frac {2d} {d-2}} {\rm d}x  \right)^{\frac {d-2} d}
\end{align}
via the duality argument
\begin{align}  \label{eq:duality-argument}
\inf_{\|V_+\|_{L^{d/2}(\R^d)} \le S_d } \langle u, (-\Delta -V) u\rangle =  \| \nabla u\|_{L^2(\R^d)}^2 - S_d  \|u\|_{L^{\frac {2d} {d-2}} (\R^d)} ^2  \ge 0. 
\end{align}
However, the CLR inequality \eqref{eq:CLR} is much deeper than Sobolev's inequality since it captures correctly the semiclassical behavior which is usually described by Weyl's law in the large coupling regime 
\begin{align}\label{eq:CLR-phase-space}
N(0, -\Delta-\lambda V) & = \frac 1 {(2\pi)^{d}} |\{ (p,x) \in \R^d \times \R^d: |p|^2 - \lambda V(x) <0\} |+ o(\lambda^{d/2})_{\lambda\to \infty} \nonumber\\
&=\frac{ |B| }{(2\pi)^{d}} \int_{\R^d} (\lambda V(x))_+^{d/2} {\rm d} x + o(\lambda^{d/2})_{\lambda\to \infty}
\end{align}
where $|B|$ is the volume of the unit ball $B=\{x \in \R^d: |x|<1\}$. We refer to  \cite{FLW2023} for a textbook introduction to \eqref{eq:CLR}, \eqref{eq:CLR-phase-space} and related estimates. 

In the present paper, we are interested in potentials  of the form
$$
V(x)=  \frac{(d-2)^2}{4|x|^2}  + W(x),\quad W\in L^{\frac{d}{2}}(\R^d)
$$
where the singular part comes from Hardy's inequality 
\begin{align}\label{eq:Hardy}
\mathcal L = -\Delta - \frac{(d-2)^2}{4|x|^2} \ge 0 
\qquad\text{on}\ L^2(\R^d).
\end{align}

It was proved in \cite{EkFr} that if the Hardy--Schr\"odinger operator $\mathcal{L}-W(x)$ has negative eigenvalues $\{E_n\}_{n\ge 1}$, then 
\begin{align} \label{eq:intro-HLT}
\sum_{n\ge 1} \bigl|E_n\bigr|^\gamma
\lesssim_{\gamma,d}  \int_{\R^d} W(x)_+^{\gamma+d/2} {\rm d} x
\end{align}
for all $d\ge 3$ and $\gamma>0$.  The Hardy--Lieb--Thirring inequality \eqref{eq:intro-HLT} is an improvement over standard Lieb-Thirring inequalities   \cite{LieThi-75,LieThi-76} concerning similar estimates for the Schr\"odinger operator $-\Delta-W$. 

On the other hand, it is well-known that \eqref{eq:intro-HLT} does not hold for $\gamma=0$ \cite{EkFr}. In fact, even the corresponding Sobolev inequality does not hold with $\|u\|_{\dot H^1}^2$ replaced by $\langle u, \mathcal{L}u\rangle$. However, there is a remarkable replacement for the Sobolev inequality in the restricted case where $\R^d$ is replaced by the unit ball $B$. To be precise, it was proved by Filippas--Tertikas in \cite{FTJFA2002} (see also Musina's remarks in \cite{MJFA09}) that 
	\begin{equation}	\label{eq:Hardy-Sobolev} 
	\langle u, \cL u\rangle \gesim_d \left (  \int_{B} \frac{ |u(x)|^{\frac{2d}{d-2}}}{(1+ |\ln |x| | )^{1+\frac{d}{d-2}}} {\rm d}x \right)^{\frac {d-2}{d}} ,\quad u\in C_c^\infty(B)
	\end{equation}
where the power of the logarithmic weight is optimal. %Here $B$ denotes the unit ball in $\R^d$. 
By a duality argument similar to \eqref{eq:duality-argument}, the Hardy--Sobolev inequality \eqref{eq:Hardy-Sobolev} is equivalent to the fact that $\mathcal{L}-W \ge 0$ on $L^2(B)$, as quadratic forms with Dirichlet boundary conditions, if 
$$
 \int_{B} W(x)_+^{\frac d2} (1+|\ln |x||)^{d-1} {\rm d}x\,
$$
is sufficiently small. 

Our first new result is an extension of the above Hardy--Sobolev inequality concerning the number of negative eigenvalues of $\mathcal{L}-W$ on $L^2(\R^d)$. In particular, it allows to extend \eqref{eq:Hardy-Sobolev} to the whole $\R^d$. 

\begin{theorem}[CLR type bound for Hardy--Schr\"odinger operator]\label{thm:CLR} For every dimension $d\ge 3$, there exists a constant $C_d>0$ independent of the real-valued potential $W$ such that
	\begin{align}\label{eq:CLR-Hardy}
	N(0,\mathcal L - W) \le 1 + C_d \int_{\R^d} W(x)_+^{\frac d2} (1+|\ln |x||)^{d-1} {\rm d}x\,.
	\end{align}
\end{theorem}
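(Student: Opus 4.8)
The plan is to reduce the CLR-type bound for $\mathcal{L}-W$ to the classical CLR inequality \eqref{eq:CLR} via a ground-state substitution (Doob transform). Recall that Hardy's inequality \eqref{eq:Hardy} has the formal ground state $\psi(x)=|x|^{-(d-2)/2}$, which satisfies $-\Delta\psi = \tfrac{(d-2)^2}{4|x|^2}\psi$ away from the origin. Writing $u = \psi v$, a standard computation gives the identity
\begin{equation}\label{eq:gs-sub}
\langle u,\mathcal{L}u\rangle = \int_{\R^d} |\nabla v(x)|^2\,\psi(x)^2\,{\rm d}x
= \int_{\R^d} |\nabla v(x)|^2\,|x|^{-(d-2)}\,{\rm d}x ,
\end{equation}
so that $N(0,\mathcal{L}-W)$ equals the number of negative eigenvalues of the operator $-|x|^{d-2}\,\mathrm{div}\big(|x|^{-(d-2)}\nabla\cdot\big) - W$ on the weighted space $L^2(\R^d, |x|^{-(d-2)}\,{\rm d}x)$. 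The first step, then, is to make this unitary equivalence precise (it is essentially the content of \cite{FTJFA2002, EkFr}) and to reduce matters to a Birman--Schwinger / Cwikel-type bound for the operator on the right-hand side of \eqref{eq:gs-sub}.

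The second step is to recognize that the weighted Dirichlet form in \eqref{eq:gs-sub} lives on a space that, after a change of variables, is conformally equivalent to a cylinder $\R\times\mathbb{S}^{d-1}$. Concretely, setting $x = e^{-t}\omega$ with $t=-\ln|x|\in\R$ and $\omega\in\mathbb{S}^{d-1}$, the right-hand side of \eqref{eq:gs-sub} becomes (up to constants) $\int_{\R\times\mathbb{S}^{d-1}} \big(|\partial_t w|^2 + |\nabla_\omega w|^2\big)\,{\rm d}t\,{\rm d}\omega$, i.e. the Dirichlet form of $-\Delta_{\R\times\mathbb{S}^{d-1}} = -\partial_t^2 - \Delta_{\mathbb{S}^{d-1}}$ on the cylinder. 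Under this map the potential $W(x)$ is transformed into $\widetilde W(t,\omega) = e^{-2t}\,W(e^{-t}\omega)$ (the factor $e^{-2t}=|x|^{-2}$ coming from the Jacobian of the conformal change), and one computes
\begin{equation}\label{eq:weight-computation}
\int_{\R\times\mathbb{S}^{d-1}} \widetilde W_+^{d/2}\,{\rm d}t\,{\rm d}\omega
\;\asymp_d\; \int_{\R^d} W(x)_+^{d/2}\,|x|^{-d}\cdot|x|^{-2\cdot d/2}\cdot|x|^{d}\,{\rm d}x
\;=\; \int_{\R^d} W(x)_+^{d/2}\,|x|^{-d}\,{\rm d}x ,
\end{equation}
which is \emph{not} the weight we want; the discrepancy is exactly why one must work harder. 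The point is that the CLR inequality on the cylinder $\R\times\mathbb{S}^{d-1}$ cannot hold with just $\|\widetilde W_+\|_{L^{d/2}}$ on the right, since $-\Delta_{\mathbb{S}^{d-1}}$ has a zero eigenvalue and the cylinder is only ``one-dimensional at infinity'' in the $t$ direction. So the core analytic step is a CLR-type bound on the cylinder that accounts for the zero mode: decomposing $L^2(\R\times\mathbb{S}^{d-1}) = L^2(\R)\otimes\big(\C\,1 \oplus (\text{higher modes})\big)$, on the higher-modes sector the operator $-\partial_t^2 - \Delta_{\mathbb{S}^{d-1}}$ has a spectral gap and behaves like a genuinely $d$-dimensional Schrödinger operator (standard CLR applies, giving the bound $\int \widetilde W_+^{d/2}$), while on the zero-mode sector one is left with a one-dimensional operator $-\partial_t^2 - \overline W(t)$ on $L^2(\R)$ where $\overline W(t)=\int_{\mathbb{S}^{d-1}}\widetilde W(t,\omega)\,{\rm d}\omega / |\mathbb{S}^{d-1}|$ is the angular average. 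For a 1D Schrödinger operator one has $N(0,-\partial_t^2-\overline W)\le 1 + C\int_\R |t|\,\overline W(t)_+\,{\rm d}t$ (the weight $|t|$ is the 1D analogue of a "virtual level" correction, and the ``$1+$'' accounts for the possibility of a single bound state from an arbitrarily small well). Tracking the weights through \eqref{eq:weight-computation}, the $|t|$ weight becomes $|\ln|x||$ and, combined with the contribution from the higher modes, this should assemble into the claimed weight $(1+|\ln|x||)^{d-1}$ after an interpolation/Hölder argument between the $L^1$-type 1D bound and the $L^{d/2}$-type higher-dimensional bound (the exponent $d-1$ arising as the ``codimension'' of the zero mode).

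I expect the main obstacle to be the rigorous treatment of the zero-mode / 1D piece and its gluing to the higher modes: one needs a CLR-type inequality on the cylinder with the correct logarithmic weight, uniformly in $W$, and this requires either a careful Birman--Schwinger analysis with a weighted Cwikel estimate (controlling $\|W^{1/2}(-\partial_t^2-\Delta_{\mathbb{S}^{d-1}}+\kappa)^{-1}W^{1/2}\|$ in an appropriate Schatten or weak-trace norm, localized in $t$) or a Conlon--Li--Yau / Lieb-style coherent-state argument adapted to the cylinder geometry. A second, more technical obstacle is justifying the ground-state substitution \eqref{eq:gs-sub} as a genuine \emph{unitary equivalence of self-adjoint operators} (not just of quadratic forms on smooth compactly supported functions): since $\psi\notin L^2(\R^d)$ near both $0$ and $\infty$, one must check that the form domain of $\mathcal{L}$ maps correctly onto the weighted Sobolev space and that no spurious bound states are created or lost at the origin — this is where the precise normalization of the ``$1+$'' in \eqref{eq:CLR-Hardy} comes from. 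Once these two points are settled, the remaining estimate \eqref{eq:weight-computation} and the interpolation are routine.
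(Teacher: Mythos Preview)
Your reduction via ground-state substitution and the logarithmic change of variables to the cylinder $\R\times\Sph^{d-1}$ is correct and is, in disguise, exactly the paper's radial/non-radial decomposition: the higher spherical-harmonic modes correspond to $P^\perp$, where the improved Hardy inequality $P^\perp\mathcal L P^\perp\gesim_d P^\perp(-\Delta)P^\perp$ gives the spectral gap you use, and the zero mode corresponds to the radial sector $P$. Two small corrections along the way: the computation you label \eqref{eq:weight-computation} actually yields $\int_{\R^d} W_+^{d/2}\,{\rm d}x$ with no extra power of $|x|$ (the conformal factor $|x|^d$ from $\widetilde W^{d/2}$ cancels the Jacobian $|x|^{-d}$ of $dt\,d\omega$), which is precisely what one wants on the higher modes; and the ``$+1$'' does \emph{not} come from form-domain subtleties of the ground-state substitution but simply from the fact that splitting the half-line at a point is a rank-one perturbation (in the paper this is the passage from $\R^d$ to $B\cup B^c$, equation \eqref{eq:KVW}).

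The genuine gap is your treatment of the zero mode. After the change of variables the radial problem becomes a one-dimensional Schr\"odinger operator $-\partial_t^2-V$ on $\R$ with $V(t)=e^{-2t}\,\overline W(e^{-t})$, and the target inequality transforms exactly into
\[
N(0,-\partial_t^2-V)\;\le\;1+C_d\int_\R V(t)_+^{d/2}\,(1+|t|)^{d-1}\,{\rm d}t .
\]
Bargmann's bound gives only $1+\int_\R |t|\,V(t)_+\,{\rm d}t$, which is an $L^1$-type quantity in $V$; you then propose to ``interpolate'' with the $L^{d/2}$-type higher-mode bound to produce the weight $(1+|t|)^{d-1}$. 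This cannot work: the zero mode and the higher modes live in orthogonal subspaces, and a bound on one cannot improve a bound on the other. Equivalently, H\"older between $\int |t|\,V$ and $\int V^{d/2}$ produces a divergent factor $\int (1+|t|)^{-1}\,{\rm d}t$. The displayed one-dimensional inequality is therefore the entire content of the theorem on the radial sector, and it needs its own proof. The paper supplies exactly this: it obtains the corresponding CLR bound on $L^2(B)$ (equation \eqref{eq:CLR-radial-ball}) by first proving a Strauss-type pointwise estimate $\rho(x)\lesssim_d |x|^{-(d-2)}(1+|\ln|x||)$ for the density of an orthonormal family (Lemma \ref{lem:radial}), using the explicit Bessel-function diagonalization of $\mathcal L_B$ from \cite{VazZua-00} and a Rumin-type argument, and then combining this with Musina's inequality and the Hoffmann--Ostenhof convexity to get a kinetic inequality that dualizes to the weighted CLR bound. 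If you can prove the displayed one-dimensional inequality directly (it is not the standard Bargmann/Calogero bound), your scheme goes through; otherwise the ``interpolation'' step is the missing idea.
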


Here when the right-hand side of \eqref{eq:CLR-Hardy} is finite, $\cL-W$ is bounded from below with the core domain $C_c^\infty(\R^d\backslash\{0\})$ and extended to be a self-adjoint operator by Friedrichs' method.

\begin{remark}
	The number 1 on the right-hand side of \eqref{eq:CLR-Hardy} cannot be removed. This follows from the fact that the operator $\mathcal L-\lambda W$ has a negative eigenvalue for all $\lambda>0$, whenever $W\geq 0$ and $W\not\equiv 0$; see, e.g., \cite[Remark 1.4 and Proposition 3.2]{EkFr}. In this situation, our bound \eqref{eq:CLR-Hardy} implies that $\mathcal L-\lambda W$ has exactly one negative eigenvalue for $\lambda>0$ small, and hence it  is optimal in the small coupling regime. Our bound also captures the optimal $\lambda^{d/2}$-behavior of $N(0,\mathcal L - \lambda W)$ for $\lambda>0$ large.  
\end{remark}

\begin{remark}The number 1 on the right-hand side can be removed if we restrict the consideration to any ball $B_R=\{x \in \R^d: |x|<R\} \subset \R^d$ with Dirichlet boundary conditions.  More precisely, it follows from our proof that if we restrict the quadratic form of $\mathcal L - W$ on the ball $B_R$, then
	\begin{align}\label{eq:CLR-Hardy-BR}
	N(0,\mathcal L - W) \le C_d \int_{B_R} W(x)_+^{\frac d2} (1+|\ln |x/R||)^{d-1} {\rm d}x \,.
	\end{align}
	We note that the logarithmic weight in \eqref{eq:CLR-Hardy-BR} is a consequence of the presence of the critical Hardy potential $-(d-2)^2/(4|x|^2)$, which is singular at the origin. If instead we use the critical Hardy potential $-1/(4(R-|x|)^2)$, which is singular at the boundary, the inequality holds without the logarithmic weight, as shown in \cite{FL12}. 
\end{remark}

In the proof of Theorem \ref{thm:CLR}, we will use an improvement of Hardy's inequality on the orthogonal complement of radial functions, where the singular potential $(d-2)^2/(4|x|^2)$ is not critical, and then restrict the consideration to radial functions. This strategy has been used extensively in the literature; two examples  (not necessarily the earliest) are the paper by Solomyak \cite{So} and by Birman--Laptev \cite{BiLa}. These ingredients also appear in the proof of Hardy--Lieb--Thirring inequalities \eqref{eq:intro-HLT} in \cite{EkFr} as well as the Hardy--Sobolev inequality \eqref{eq:Hardy-Sobolev} in \cite{FTJFA2002,MJFA09}. The original ingredient in our paper is the treatment on the subspace of radial functions, for which we do not know a precedent. In particular, we will prove the following Strauss type inequality, which is of independent interest. 

\begin{lemma}[Strauss type estimate for radial functions]\label{lem:radial} Consider the operator $\cL_B$ defined by the quadratic form in \eqref{eq:Hardy} restricted to $L^2(B)$ with Dirichlet boundary conditions.  Then for all radial functions $\{u_n\}_{n\ge 1}$ in the quadratic form domain of $\mathcal{L}_B$ satisfying 
\begin{align}\label{eq:Pauli}
\sum_{n\ge 1}  |\sqrt{\mathcal{L}_B} u_n \rangle \langle \sqrt{\mathcal{L}_B} u_n| \le 1\quad \text{ in }L^2(B),
\end{align}
we have the pointwise estimate 
	\begin{align}\label{eq:pointwise-Bessel}
	\rho(x):= \sum_{n\ge 1} |u_n(x)|^2 \le  \frac{C_d}{|x|^{d-2}} (1+|\ln |x||) \quad \text{ for a.e. } x \in B. 
	\end{align}
	\end{lemma}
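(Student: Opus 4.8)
The plan is to reduce the problem to a one-dimensional estimate via the standard substitution that diagonalizes the Hardy operator on radial functions. For a radial function $u(x) = f(|x|)$ on $B$, writing $r = |x|$, one has
\[
\langle u, \mathcal{L}_B u\rangle = \omega_{d-1} \int_0^1 \left( |f'(r)|^2 - \frac{(d-2)^2}{4r^2}|f(r)|^2 \right) r^{d-1}\,{\rm d}r,
\]
where $\omega_{d-1} = |\mathbb{S}^{d-1}|$. Setting $f(r) = r^{-(d-2)/2} g(r)$, a computation (integration by parts, with the boundary term at $r=0$ vanishing on the form domain) turns this into $\omega_{d-1}\int_0^1 |g'(r)|^2\, {\rm d}r$, i.e. $\mathcal{L}_B$ on radial functions is unitarily equivalent to $-\partial_r^2$ with Dirichlet condition at $r=1$ on $L^2((0,1), {\rm d}r)$, via the unitary $U: L^2(B)_{\rm rad} \to L^2((0,1),{\rm d}r)$, $(Uu)(r) = \sqrt{\omega_{d-1}}\, r^{(d-2)/2} f(r)$. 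Under this map, condition \eqref{eq:Pauli} becomes $\sum_n |\sqrt{T}\, g_n\rangle\langle \sqrt{T}\, g_n| \le 1$ on $L^2((0,1))$ where $T = -\partial_r^2$ with Dirichlet condition at $1$, $g_n = U u_n$, and the density transforms as $\rho(x) = \sum_n |u_n(x)|^2 = \omega_{d-1}^{-1} |x|^{-(d-2)} \sum_n |g_n(|x|)|^2$. So the claim \eqref{eq:pointwise-Bessel} is equivalent to the bound $\sum_n |g_n(r)|^2 \lesssim_d 1 + |\ln r|$ for a.e. $r \in (0,1)$.

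Next I would establish this one-dimensional bound. The key point is a Green's function / reproducing-kernel argument: for the operator $T = -\partial_r^2$ with Dirichlet condition at $r=1$ (and the natural boundary behavior at $r=0$ coming from the form domain, which for functions $g$ with $g(r) = o(\sqrt{r})$ is effectively a Neumann-type free end from the half-line perspective but here we only need the Green's function bound), the evaluation functional $g \mapsto g(r)$ is controlled by $\|\sqrt{T}\, g\|^2 = \int_0^1 |g'|^2$. Concretely, for $g$ in the form domain, $g(r) = -\int_r^1 g'(s)\, {\rm d}s$ plus a possible boundary contribution, but since $g(1) = 0$ we get $g(r) = -\int_r^1 g'(s)\,{\rm d}s$, whence by Cauchy--Schwarz $|g(r)|^2 \le (1-r)\int_r^1 |g'|^2 \le \|\sqrt{T}\,g\|^2$. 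This says the rank-one operator $|\delta_r\rangle\langle\delta_r|$ as a form on the form domain is bounded by $\|\sqrt T \cdot\|^2$ — but with constant $1$, which only gives $\sum_n |g_n(r)|^2 \le 1$, not enough, and in any case this is too crude near $r = 0$. The correct approach: for fixed $r$, write $g_n(r) = \langle \sqrt{T}\, g_n, \sqrt{T}^{-1}\delta_r\rangle$ where $\sqrt{T}^{-1}\delta_r = \sqrt{T}^{-1}$ applied to the delta, so that by the diamagnetic-type bound implied by \eqref{eq:Pauli},
\[
\sum_n |g_n(r)|^2 = \sum_n \bigl|\langle \sqrt{T}\, g_n, h_r\rangle\bigr|^2 \le \langle h_r, h_r\rangle = \langle \delta_r, T^{-1}\delta_r\rangle = G(r,r),
\]
where $h_r = \sqrt{T}^{-1}\delta_r$ and $G$ is the Green's function of $T$. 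So it remains to compute $G(r,r)$, the Green's function of $-\partial_r^2$ on $(0,1)$ with Dirichlet condition at $1$ — but this diverges, reflecting that $-\partial_r^2$ is not invertible at the bottom of the spectrum (there is no gap at $r \to 0$); this is exactly the obstruction one expects, and is resolved by recalling that the correct operator after the substitution retains a remnant: the Dirichlet form on radial functions of $\mathcal{L}_B$ is $\int_0^1 |g'|^2\,{\rm d}r$ but on functions satisfying $g(r)/\sqrt{r} \to 0$ as $r \to 0$, and this constraint is what regularizes things. The clean way is to instead bound $G(r,r)$ for the Green's function associated with the form $\int_0^1|g'|^2$ on exactly the space $\{g : g(1) = 0,\ g(r)/\sqrt r \to 0\text{ as }r\to 0\}$; a direct ODE computation shows the Green's function kernel behaves like $\min(|\ln r|, |\ln s|) + O(1)$, giving $G(r,r) \lesssim_d 1 + |\ln r|$.

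I expect the main obstacle to be making rigorous the handling of the endpoint $r = 0$: the quadratic form $\int_0^1|g'|^2$ on its own Friedrichs extension on $L^2(0,1)$ is \emph{not} bounded below by a positive multiple of the identity, so the naive Green's function diverges, and one must carefully identify how the original Hardy structure (the condition that $u$ lies in the form domain of $\mathcal{L}_B$, equivalently $g(r) = o(r^{1/2})$ near $0$, together with the precise Hardy inequality on this class) provides the needed coercivity. Concretely I would argue: split $\rho$ using a dyadic decomposition in $r$, and on each dyadic annulus $r \sim 2^{-k}$ apply a scaled version of the estimate $|g(r)|^2 \lesssim (\text{length of interval}) \cdot \int |g'|^2$ restricted to $(2^{-k-1}, 1)$, using $g(1) = 0$; summing the resulting bound $\sum_n |g_n(r)|^2 \lesssim \sum_{j \le k} 1 \lesssim k \sim |\ln r|$ over the decomposition yields \eqref{eq:pointwise-Bessel}. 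The bookkeeping of constants (depending only on $d$ through $\omega_{d-1}$) and verifying the boundary term vanishing on the form domain are the remaining routine points.
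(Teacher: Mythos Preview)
Your overall strategy---bound $\rho(x)$ by the diagonal of the Green's function of $P\mathcal L_B$---is exactly what the paper does; the paper's bound $\rho(x)\le\sum_k|\varphi_{0,k}(x)|^2/\lambda_{0,k}$ is precisely the spectral expansion of $(P\mathcal L_B)^{-1}(x,x)$, which it then estimates via Bessel asymptotics. Your route through the ground-state substitution could in fact be more elementary than the paper's, but the reduction you wrote down is wrong, and this error propagates through the rest of the argument.

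The substitution $f(r)=r^{-(d-2)/2}g(r)$ does \emph{not} turn the form into $\omega_{d-1}\int_0^1|g'|^2\,{\rm d}r$ on $L^2((0,1),{\rm d}r)$. A direct computation gives
\[
\langle u,\mathcal L_B u\rangle=\omega_{d-1}\int_0^1 r\,|g'(r)|^2\,{\rm d}r
\qquad\text{and}\qquad
\|u\|_{L^2(B)}^2=\omega_{d-1}\int_0^1 r\,|g(r)|^2\,{\rm d}r,
\]
so $P\mathcal L_B$ is unitarily equivalent to the \emph{two-dimensional} radial Dirichlet Laplacian $-r^{-1}\partial_r(r\,\partial_r)$ on $L^2((0,1),r\,{\rm d}r)$. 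This is why the paper's eigenfunctions involve $J_0$. Your claimed operator $-\partial_r^2$ on $L^2((0,1),{\rm d}r)$ with Dirichlet at $1$ would have Green's function $G(r,s)=1-\max(r,s)$, giving $G(r,r)\le 1$ and hence $\rho(x)\lesssim|x|^{-(d-2)}$ with no logarithm---which contradicts the optimality of the log weight noted in the paper. The confusion you detected (``too strong a bound'', divergences, boundary conditions at $0$) is a symptom of this miscalculation, not of a subtle endpoint issue.

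With the correct form $\int_0^1 r|g'|^2\,{\rm d}r$, the argument is immediate: from $g(1)=0$ and Cauchy--Schwarz,
\[
|g(r)|^2=\Big|\int_r^1 g'(s)\,{\rm d}s\Big|^2\le\Big(\int_r^1\frac{{\rm d}s}{s}\Big)\Big(\int_r^1 s|g'(s)|^2\,{\rm d}s\Big)=|\ln r|\cdot\|T^{1/2}g\|^2,
\]
equivalently the Green's function of $T=-r^{-1}\partial_r(r\,\partial_r)$ with Dirichlet at $1$ is $G(r,s)=-\ln\max(r,s)$, so $G(r,r)=|\ln r|$. Then the Pauli condition gives $\sum_n|g_n(r)|^2\le G(r,r)=|\ln r|$, and unwinding yields \eqref{eq:pointwise-Bessel}. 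This corrected version is equivalent to the paper's computation but bypasses the Bessel asymptotics entirely. Your dyadic sketch at the end, as written, does not recover the logarithm because it uses the wrong form; once you insert the weight $r$, the Cauchy--Schwarz above already does the job without any dyadic decomposition.
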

	
	\begin{remark} The bound \eqref{eq:pointwise-Bessel} is reminiscent of Strauss' pointwise decay $|u(x)|^2 \lesim_d |x|^{-(d-1)}$ of a single radial function in $H^1(\R^d)$  \cite[Lemma 1]{St77}. Our proof strategy of Lemma \ref{lem:radial} also allows to show that if radial functions $\{v_n\}_{n\ge 1} \subset H_0^1(B)$ satisfy the orthogonality \eqref{eq:Pauli} with  $\mathcal{L}_B$ replaced by the usual Dirichlet Laplacian $-\Delta_B$, then $\sum_{n\ge 1}|v_n(x)|^2 \lesim_d |x|^{-(d-2)}$ for $d\ge 3$ (see Remark \ref{rmk:optimality}). In contrast, \eqref{eq:pointwise-Bessel} is slightly worse than the latter bound since it contains a logarithmic weight, which is however optimal due to the effect of the Hardy potential.
\end{remark}
%
%% \frac{1}{|x|^{d-2}} 
% 
%result corresponds to an extension of this bound to orthogonal functions. We believe that the logarithm in \eqref{eq:pointwise-Bessel} is a consequence of orthogonality, and not of the Hardy potential. In other words, this logarithm is sharp even for the Dirichlet Laplacian instead of $\mathcal{L}_B$ . 

Our proof of Lemma \ref{lem:radial}  uses an analogue of Rumin's method \cite{R11}, plus the precise spectral property of $\mathcal{L}_B$ which has been understood by V\'azquez--ZuaZua \cite{VazZua-00}. Although the result there holds only the unit ball, its application to the whole $\R^d$ is made possible due to the relation
\begin{align} \label{eq:KVW}
N(0, P(\mathcal{L}_{\R^d}-W)P) \le 1 +  N(0, P(\mathcal{L}_{B}-\1_{B}W)P) + N(0, P(\mathcal{L}_{B^c}-\1_{B^c}W)P)
\end{align}
where $P$ is the projection onto radial functions. The bound \eqref{eq:KVW} follows from the fact that imposing the Dirichlet boundary condition at $|x|=1$ in one-dimension is a rank-one perturbation. The same idea was used by Kova\v r\'ik--Vugalter--Weidl \cite{KVWCMP07} to derive CLR type estimates for Schr\"odinger operators in 2D. The conclusion of Theorem \ref{thm:CLR} then follows from an inversion in the unit sphere which allows to control $N(0, P(\mathcal{L}_{B^c}-\1_{B^c}W)P)$ via $N(0, P(\mathcal{L}_{B}-\1_{B}W)P)$.

Note that the Hardy--Lieb--Thirring inequality \eqref{eq:intro-HLT} has been extended to the fractional Laplacian $(-\Delta)^s$ with singular potential $|x|^{-2s}$ \cite{FLSJAMS08,Frank-09}, which in the case $s=1/2$ is relevant to applications in the stability of relativistic matter. Again, the corresponding bound has been  known only for eigenvalue moments $\gamma>0$. The case $\gamma=0$ is the content of our next result. 

Let $d\ge 1$ and  $0<s<\min(1, d/2)$. Let $(-\Delta)^s$ be the fractional Laplacian on $L^2(\R^d)$ defined via the quadratic form 
	\begin{equation}\label{defi-semi-norm}
h_s[u]=  \langle  u, (-\Delta)^s u \rangle = a_{s,d}  \int_{\R^d} \int_{\R^d}  \frac{|u(x) -u(y)|^2}{|x-y|^{d+2s}} {\rm d} x \,  {\rm d} y, \quad {a}_{s,d} =  2^{2s-1} \frac{\Gamma(\frac{d+2s}{2})}{\pi^{\frac{d}{2}}|\Gamma(-s)|}.
\end{equation}
Consider the fractional Hardy--Schr\"odinger operator
\begin{align} \label{eq:fractional-Hardy}  \mathcal{L}_s  =  (-\Delta)^s  -  \frac{ \mathcal{C}_{s, d}}{|x|^{2s}} \ge 0 \quad \text{ on }L^2(\R^d),   \quad  \mathcal{C}_{s,d} = 2^{2s} \frac{\Gamma^2\left(\frac{d + 2s}{4}\right)}{\Gamma^2\left(\frac{d - 2s}{4}\right)},
\end{align}
where $\mathcal{C}_{s, d}$ is the optimal constant in the fractional Hardy inequality \cite{Herbst-77}. We have the following extension of Theorem \ref{thm:CLR} to the fractional case. 

\begin{theorem}[Fractional CLR type bound]\label{thm:fractional} For every dimension $d\ge 1$ and $0<s<\min(1,d/2)$, there exists a constant $C_{s,d}>0$ independent of the real-valued potential $W$ such that

$$ N(0, \mathcal{L}_s  -W)   \le C_{s,d} \left(1  + \int_{\R^d} W(x)_+^{\frac{d}{2s}} (1+|\ln|x||)^{\frac{d-s}{s}}{\rm d} x\right).$$
\end{theorem}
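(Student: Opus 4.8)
The plan is to mimic the proof of Theorem~\ref{thm:CLR} for the fractional operator $\mathcal{L}_s$, reducing everything to a radial Strauss-type bound. First I would decompose $L^2(\R^d)$ into spherical harmonics; on the orthogonal complement of the radial sector the fractional Hardy constant $\mathcal{C}_{s,d}$ is strictly subcritical for the higher angular momenta, so Hardy's inequality there has a gap, and the standard diamagnetic/CLR machinery (for instance via the Lieb--Thirring--type ground-state representation or the Rumin method as used in the integer case) yields a clean CLR bound $\lesssim_{s,d}\int_{\R^d} W_+^{d/2s}$ on that subspace, which is dominated by the logarithmically weighted integral in the statement. Hence, as in \eqref{eq:KVW}, the task is to bound $N(0,P(\mathcal{L}_s-W)P)$ on the radial subspace, and there the localization to the ball $B$ costs only a finite rank (the Dirichlet condition at $|x|=1$ is a rank-one-type perturbation after passing to the appropriate one-dimensional variable), plus an inversion $x\mapsto x/|x|^2$ that transfers the exterior contribution back to the ball.

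The core new input needed is the fractional analogue of Lemma~\ref{lem:radial}: for radial $\{u_n\}$ in the form domain of the Dirichlet restriction $\mathcal{L}_{s,B}$ with $\sum_n |\sqrt{\mathcal{L}_{s,B}}u_n\rangle\langle\sqrt{\mathcal{L}_{s,B}}u_n|\le 1$, one expects the pointwise density bound
\[
\rho(x)=\sum_{n\ge1}|u_n(x)|^2 \le \frac{C_{s,d}}{|x|^{d-2s}}\,(1+|\ln|x||)
\qquad\text{for a.e. } x\in B .
\]
I would prove this by the Rumin-type argument: for a fixed $x$ and a parameter $E>0$, split each $u_n$ using the spectral projections of $\mathcal{L}_{s,B}$ above and below $E$; the high-energy part is controlled by $\sum_n\|\sqrt{\mathcal{L}_{s,B}}u_n\|^2\le \operatorname{rank}$-type bounds combined with an $L^\infty$ bound on the integral kernel of $\mathcal{L}_{s,B}^{-1}$ evaluated near the diagonal, while the low-energy part is controlled by $\operatorname{tr}\,\mathbf{1}_{(0,E)}(\mathcal{L}_{s,B})\,|\delta_x\rangle\langle\delta_x|$-type quantities, i.e. by the on-diagonal behavior of the spectral measure. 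Both require the detailed structure of $\mathcal{L}_{s,B}$ restricted to radial functions — here one uses that on radial functions $\mathcal{L}_{s,B}$ is unitarily equivalent, after the substitution $r=|x|$ and an Emden--Fowler-type change of variables $t=-\ln r$, to a one-dimensional operator whose low-lying spectral density and Green's function can be computed or estimated; the logarithmic correction in the bound is exactly the signature of the criticality of the Hardy term in this one-dimensional reduction, coming from the $t\to+\infty$ (i.e. $|x|\to0$) asymptotics. The relevant one-dimensional spectral analysis in the local case $s=1$ was carried out by V\'azquez--Zuazua \cite{VazZua-00}; for general $s$ one needs the corresponding nonlocal fact, which can be extracted from the Mellin/ground-state transform diagonalizing $(-\Delta)^s - \mathcal{C}_{s,d}|x|^{-2s}$ on radial functions (as in \cite{FLSJAMS08,Frank-09}).

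Once the pointwise density bound is in hand, the Birman--Schwinger argument is standard: $N(0,P(\mathcal{L}_{s,B}-\mathbf{1}_BW)P)$ is at most the number of eigenvalues $\ge1$ of $W_+^{1/2}\mathcal{L}_{s,B}^{-1}W_+^{1/2}$ restricted to radial functions, which one estimates not by the naive trace (divergent) but by iterating Rumin's trick one more time — testing against the spectral density $\rho$ with weight $W_+$ — to get $N \lesssim_{s,d}\int_B W_+^{d/2s}(1+|\ln|x||)^{(d-s)/s}\,dx$; the power $(d-s)/s=d/s-1$ of the logarithm is produced by combining the single logarithm in the density bound with the $\frac{d}{2s}$-fold Hölder/layer-cake application, just as $d-1$ arises in the integer case from $\frac d2$ applications. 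The main obstacle I anticipate is establishing the fractional Strauss-type density bound with the \emph{sharp} logarithmic power: the nonlocality of $(-\Delta)^s$ obstructs a direct one-dimensional ODE computation, so one must either work through the explicit diagonalization of the radial fractional Hardy operator (controlling the resulting Green's function and spectral measure uniformly) or find a robust comparison argument; getting the constants and the $|x|\to 0$ versus $|x|\to1$ (boundary) behaviors to match up cleanly, and handling the exterior-domain piece via inversion in the fractional setting where $(-\Delta)^s$ does not transform as simply under Kelvin inversion as $-\Delta$ does, are the delicate points. The subcritical angular-momentum sectors, the rank-one Dirichlet localization, and the Birman--Schwinger bookkeeping are, by contrast, routine adaptations of the $s=1$ argument.
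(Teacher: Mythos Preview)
Your proposal tries to port the $s=1$ proof to the fractional setting, but this route has two genuine gaps that the paper avoids by taking a different path.

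First, the ``rank-one Dirichlet localization'' at $|x|=1$ does \emph{not} work for the nonlocal operator. In the local case the radial part of $\mathcal{L}$ is a one-dimensional second-order ODE, so imposing a Dirichlet condition at a single point is a rank-one perturbation of the form domain and \eqref{eq:KVW} follows. For $(-\Delta)^s$ with $0<s<1$ the quadratic form \eqref{defi-semi-norm} contains the cross term $\int_B\int_{B^c}|u(x)-u(y)|^2|x-y|^{-d-2s}\,{\rm d}x\,{\rm d}y$, which does not vanish when $u$ vanishes at $|x|=1$; splitting into $B$ and $B^c$ with Dirichlet condition is therefore not a finite-rank perturbation, and no analogue of \eqref{eq:KVW} is available. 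The paper handles this instead with an IMS-type localization using smooth cutoffs (Lemma~\ref{propo-L-s-dominate}), which produces a bounded, compactly supported error potential rather than a finite-rank correction; that error is then absorbed into the additive constant~$1$ via the CLR bounds on $B$ and $B^c$.

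Second, the fractional Strauss-type density bound you identify as ``the main obstacle'' is indeed not available: there is no analogue of the V\'azquez--Zuazua explicit diagonalization of $\mathcal{L}_{s,B}$ on radial functions, and your sketch (Rumin splitting plus control of the Green's function and spectral measure of the radial nonlocal operator on $B$) does not supply a substitute. The paper bypasses this entirely: it takes Tzirakis's fractional Hardy--Sobolev inequality on $B$ (Lemma~\ref{lem:fractional-HS-B}) as a black box and converts it directly into a CLR bound on $B$ via the abstract Sobolev$\Leftrightarrow$CLR equivalence (Theorem~\ref{thm:clrweighted}). No radial/non-radial splitting, no spectral analysis of $\mathcal{L}_{s,B}$, and no pointwise density bound are needed. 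The inversion step you worry about actually works cleanly in the fractional case (see the proof of Lemma~\ref{lem:fractional-CLR-Bc}); it is the other two ingredients that fail.
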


We remark that the same inequality holds in the presence of a magnetic field.

The proof of Theorem \ref{thm:fractional} deviates substantially from that of  Theorem \ref{thm:CLR}. On the one hand, we will split again $\R^d$ into $B$ and $B^c$, and use crucially the fractional Hardy--Sobolev inequality by Tzirakis \cite{TJFA2016} on $B$.  On the other hand, the generalization from the one-body inequality to the many-body inequality on each domain ($B$ or $B^c$) are done via an abstract equivalence of Sobolev and CLR inequalities, a strategy proposed in  \cite{FLSICMP09}.  The conclusion also requires a careful implementation of the localization method in the fractional case.

In fact, the proof of Theorem \ref{thm:fractional} is also valid when $s=1$ and $d\geq 3$ and simplifies at some points. Thus, we obtain an alternative proof of Theorem \ref{thm:CLR}. In the local case, however, we feel that the first proof is more direct, which motivated us to present it first. 

We will prove Theorem \ref{thm:CLR} and Theorem \ref{thm:fractional} in Section \ref{sec:Thm1} and Section \ref{sec:Thm2}, respectively. 

%%%%%%%%%%%%%%%%%%%%%%%%%%%%%%%%%%%%%%%%%%%%%%%%
%%%%%%%%%%%%%%%%%%%%%%%%%%%%%%%%%%%%%%%%%%%%%%%%
%%%%%%%%%%%%%%%%%%%%%%%%%%%%%%%%%%%%%%%%%%%%%%%%
%%%%%%%%%%%%%%%%%%%%%%%%%%%%%%%%%%%%%%%%%%%%%%%%

\section{Local Case}\label{sec:Thm1}

\subsection{Improved Hardy Inequalities} 
We denote by $P$ the projection onto radially symmetric functions in $L^2(\R^d)$ and set $P^\bot =1-P$. On the non-radial part, the following improved Hardy inequality is well-known (see e.g. \cite{So,BiLa,EkFr,FTJFA2002,MJFA09}). 

	\begin{lemma} \label{lem:improved-Hardy-1} We have the operator inequality on $L^2(\R^d)$
	\begin{equation} \label{eq:improved-Hardy-nonradial}
	P^\bot(-\Delta)P^\bot \geq \left( \frac{(d-2)^2}{4} + (d-1) \right) P^\bot |x|^{-2} P^\bot .
\end{equation}
	\end{lemma}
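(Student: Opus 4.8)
The plan is to decompose $L^2(\mathbb{R}^d)$ into spherical harmonics and show that the angular momentum contributes an extra inverse-square term on the non-radial part. First I would write, for $u \in C_c^\infty(\mathbb{R}^d\setminus\{0\})$, the expansion $u(x) = \sum_{\ell \ge 0} \sum_{m} u_{\ell,m}(r) Y_{\ell,m}(\omega)$ in spherical harmonics, where $r = |x|$, $\omega = x/|x|$, and $Y_{\ell,m}$ ranges over an orthonormal basis of spherical harmonics of degree $\ell$. Using $-\Delta_{\mathbb{S}^{d-1}} Y_{\ell,m} = \ell(\ell+d-2) Y_{\ell,m}$, the Dirichlet form splits as
\begin{equation*}
\int_{\mathbb{R}^d} |\nabla u|^2\,{\rm d}x = \sum_{\ell,m} \int_0^\infty \left( |u_{\ell,m}'(r)|^2 + \frac{\ell(\ell+d-2)}{r^2} |u_{\ell,m}(r)|^2 \right) r^{d-1}\,{\rm d}r,
\end{equation*}
and $P^\bot u$ corresponds precisely to dropping the $\ell = 0$ term, i.e. to the sum over $\ell \ge 1$.

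Next I would invoke the one-dimensional (radial) Hardy inequality: for any $\ell$,
\begin{equation*}
\int_0^\infty |f'(r)|^2 r^{d-1}\,{\rm d}r \ge \frac{(d-2)^2}{4} \int_0^\infty \frac{|f(r)|^2}{r^2} r^{d-1}\,{\rm d}r,
\end{equation*}
which is the sharp Hardy inequality for radial functions (equivalently \eqref{eq:Hardy} restricted to radial profiles). Adding this to the angular term and using that $\ell(\ell+d-2) \ge d-1$ for all $\ell \ge 1$ (the minimum being attained at $\ell = 1$), each summand with $\ell \ge 1$ is bounded below by $\left( \tfrac{(d-2)^2}{4} + (d-1) \right) \int_0^\infty \frac{|u_{\ell,m}(r)|^2}{r^2} r^{d-1}\,{\rm d}r$. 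Summing over $\ell \ge 1$ and all $m$, and using that $\int_{\mathbb{R}^d} |x|^{-2} |P^\bot u|^2\,{\rm d}x = \sum_{\ell \ge 1, m} \int_0^\infty \frac{|u_{\ell,m}(r)|^2}{r^2} r^{d-1}\,{\rm d}r$, yields \eqref{eq:improved-Hardy-nonradial} on the form core $C_c^\infty(\mathbb{R}^d\setminus\{0\})$, and then on the full form domain by density.

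I do not expect a serious obstacle here, since this is a classical computation; the only point requiring a little care is the correct bookkeeping of the spherical-harmonic decomposition and the identification of $P^\bot$ with the projection onto $\ell \ge 1$, together with noting that the combined constant $(d-2)^2/4 + (d-1)$ is indeed the sharp one on the non-radial subspace (the infimum over $\ell \ge 1$ of $(d-2)^2/4 + \ell(\ell+d-2)$). One should also remark that the quadratic form of $-\Delta$ commutes with $P$, so that $P^\bot(-\Delta)P^\bot$ is well-defined as the form $\int |\nabla P^\bot u|^2$, which is what the decomposition computes.
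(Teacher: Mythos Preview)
Your proposal is correct and follows exactly the approach indicated in the paper: decompose into spherical harmonics, apply the radial Hardy inequality to each component, and use that the lowest nontrivial eigenvalue of $-\Delta_{\mathbb{S}^{d-1}}$ equals $d-1$ (attained at $\ell=1$). The paper only sketches this in one sentence, so your write-up is simply a more detailed version of the same classical argument.
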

The above estimate comes from the fact that the lowest nontrivial eigenvalue of the Laplace--Beltrami operator on the unit sphere $\Sph^{d-1}$ is equal to $d-1$.
 	
	On the radial part, we recall the following result from Musina \cite{MJFA09}. 
	\begin{lemma}[{\cite[Proposition 1.1]{MJFA09}}] \label{lem:improved-Hardy-2} For every radial function $u\in H_0^1(B)$, we have 
	\begin{equation}	\label{eq:Musina} \begin{aligned}
	\langle u, \cL u\rangle \ge  \frac 1 4 \int_B \frac{|u(x)|^2}{|x|^2 |\ln |x||^2} {\rm d}x. 
	\end{aligned} \end{equation}
	\end{lemma}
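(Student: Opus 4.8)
The plan is to reduce the claimed inequality \eqref{eq:Musina} to the classical sharp one-dimensional Hardy inequality on a half-line, by a ground-state substitution followed by a logarithmic change of variables.

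First, by density it suffices to prove \eqref{eq:Musina} for radial $u\in C_c^\infty(B\setminus\{0\})$, say $u(x)=f(|x|)$ with $f\in C_c^\infty((0,1))$. Indeed, an arbitrary radial $u\in H_0^1(B)$ can be approximated in $H_0^1(B)$ by such functions (cut off near the origin, using that $\int_{|x|<\epsilon}|x|^{-2}|u|^2\to 0$ as $\epsilon\to0$, which follows from Hardy's inequality, and then mollify with a radial mollifier); the left-hand side $\langle u,\mathcal{L}u\rangle=\|\nabla u\|_{L^2}^2-\tfrac{(d-2)^2}{4}\||x|^{-1}u\|_{L^2}^2$ passes to the limit under this approximation by \eqref{eq:Hardy}, while the right-hand side can only increase in the limit by Fatou's lemma (this also shows a posteriori that the right-hand side is finite whenever the left-hand side is). Passing to polar coordinates and cancelling the common factor $|\mathbb{S}^{d-1}|$, the inequality to be proved becomes
$$ \int_0^1\Bigl(|f'(r)|^2-\frac{(d-2)^2}{4}\,\frac{|f(r)|^2}{r^2}\Bigr)r^{d-1}\,{\rm d}r\;\ge\;\frac14\int_0^1\frac{|f(r)|^2}{r^2|\ln r|^2}\,r^{d-1}\,{\rm d}r. $$

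Next, substitute $f(r)=r^{-(d-2)/2}g(r)$, so that $g\in C_c^\infty((0,1))$. Expanding $|f'(r)|^2r^{d-1}$ and integrating the cross term $-(d-2)\,gg'$ by parts — the boundary contributions vanish because $g$ is compactly supported in $(0,1)$ — the singular Hardy term cancels exactly, yielding the identity
$$ \int_0^1\Bigl(|f'(r)|^2-\frac{(d-2)^2}{4}\,\frac{|f(r)|^2}{r^2}\Bigr)r^{d-1}\,{\rm d}r=\int_0^1 r\,|g'(r)|^2\,{\rm d}r, $$
while the right-hand side turns into $\tfrac14\int_0^1|g(r)|^2\bigl(r(\ln r)^2\bigr)^{-1}\,{\rm d}r$. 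Finally, the change of variables $t=-\ln r\in(0,\infty)$, $h(t)=g(e^{-t})\in C_c^\infty((0,\infty))$, transforms $\int_0^1 r|g'(r)|^2\,{\rm d}r$ into $\int_0^\infty|h'(t)|^2\,{\rm d}t$ and $\int_0^1|g(r)|^2\bigl(r(\ln r)^2\bigr)^{-1}\,{\rm d}r$ into $\int_0^\infty|h(t)|^2 t^{-2}\,{\rm d}t$. The desired bound is thus precisely the classical one-dimensional Hardy inequality $\int_0^\infty|h'|^2\ge\tfrac14\int_0^\infty|h|^2t^{-2}$, valid for all $h\in C_c^\infty((0,\infty))$, whose optimal constant $\tfrac14$ simultaneously shows that the constant in \eqref{eq:Musina} is sharp.

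The algebra in the middle steps is elementary and produces exact identities; the only point requiring genuine care is the density/approximation argument of the first step — identifying radial $C_c^\infty(B\setminus\{0\})$ as a form core for $\mathcal{L}_B$ and controlling an a priori possibly infinite right-hand side — but this is standard once Hardy's inequality \eqref{eq:Hardy} is used to absorb the singularity at the origin.
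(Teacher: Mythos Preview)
Your proof is correct. The paper does not give its own proof of this lemma; it simply cites the result from Musina \cite[Proposition 1.1]{MJFA09}. Your argument --- ground-state substitution $f(r)=r^{-(d-2)/2}g(r)$ followed by the logarithmic change of variable $t=-\ln r$ to reduce to the sharp one-dimensional Hardy inequality on $(0,\infty)$ --- is the standard route to this inequality and is essentially what one finds in the cited literature. The computations you outline are correct (the cross term $-(d-2)gg'$ indeed integrates to zero, and the change of variables produces exactly $\int_0^\infty |h'|^2\ge \tfrac14\int_0^\infty h^2/t^2$), and your remark that the constant $\tfrac14$ is sharp is a pleasant bonus.
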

	
%%%%%%%%%%%%%%%%%%%%%%%%%%%%%%%%%%%%%%%
%%%%%%%%%%%%%%%%%%%%%%%%%%%%%%%%%%%%%%%
%%%%%%%%%%%%%%%%%%%%%%%%%%%%%%%%%%%%%%%
%%%%%%%%%%%%%%%%%%%%%%%%%%%%%%%%%%%%%%%

\subsection{Strauss type Estimate} In this subsection we restrict to radial functions and prove the pointwise estimate in Lemma \ref{lem:radial}. 

\begin{proof}[Proof of Lemma \ref{lem:radial}] Let $z_{0,k}$ be the $k$-th zero of the Bessel function $J_0$. From the spectral property of $\mathcal{L}_B$ studied in \cite{VazZua-00}, we have
$$P \mathcal{L}_B =\sum_{k\ge 1} \lambda_{0,k} | \varphi_{0,k}\rangle \langle \varphi_{0,k}|$$
with the ($L^2$-normalized) eigenfunctions
$$
\varphi_{0,k} (x) =  \frac{1}{|x|^{\frac {d-2}{2}} \sqrt{|\mathbb{S}^{d-1}|\int_0^1 r J_0^2 (z_{0,k} r) {\rm d} r}} J_0 \left(z_{0,k} |x|\right), \quad x \in B \setminus \{0\}, 
$$
and the corresponding eigenvalues $\lambda_{0,k}=z_{0,k}^2$. Here $|\mathbb{S}^{d-1}|$ is the surface area of the unit sphere in $\R^d$. Consequently, for a.e. $x \in B$,
\begin{align*}
\rho(x) &= \sum_{n \ge 1} | u_n (x)|^2 = \sum_{n \ge 1} \left| \sum_{k\ge 1} \langle \varphi_{0,k}, u_n\rangle \varphi_{0,k}(x) \right|^2 \\
&=  \sum_{n \ge 1} \left| \int_{B}  \sum_{k\ge 1} \frac{1}{\sqrt{\lambda_{0,k}}}  \overline{\varphi_{0,k}(y)}   \varphi_{0,k}(x)  \sqrt{\mathcal{L}_B}u_n(y){\rm d} y \right|^2 
\\
&\le \int_{B}   \left| \sum_{k\ge 1} \frac{1}{\sqrt{\lambda_{0,k}}} \overline{\varphi_{0,k}(y)}   \varphi_{0,k}(x) \right|^2 \,{\rm d} y   \le  \sum_{k\ge 1}  \frac{|\varphi_{0,k}(x)|^2}{\lambda_{0,k}}. 
\end{align*}
Here we used Bessel's inequality via the (sub)orthogonality of $\{\sqrt{\mathcal L}_B u_n\}_n$, or more precisely the condition \eqref{eq:Pauli}. Using the asymptotic properties of Bessel function (see \cite[p. 199]{Wat1944}) 
$$J_0(0)=1,\quad J_0(r)= \sqrt{\frac{2}{\pi r}} \cos\left(r- \frac \pi 4\right) + O(r^{-1})_{r\to \infty},$$
we have
$$
J_0^2(r) \lesssim \min (1, r^{-1}), \quad z_{0,k} \sim k, \quad \lambda_{0,k} \int_0^1 r J_0^2 (z_{0,k} r) {\rm d} r = \int_0^{z_{0,k}} r J_0^2(r) {\rm d} r \sim k.
$$
Hence, 
\begin{align*}
\rho(x)  \le \sum_{k\ge 1}  \frac{|\varphi_{0,k}(x)|^2}{\lambda_{0,k}}  \lesssim_d \frac{1}{|x|^{d-2}} \sum_{k\ge 1} \frac{1}{k} \min\left(1,\frac{1}{k|x|}\right) \lesssim_d  \frac{1}{|x|^{d-2}}  (1+|\ln |x||), 
\end{align*}
which is the desired pointwise estimate \eqref{eq:pointwise-Bessel}. 
	\end{proof}	
	
	\begin{remark}[Laplacian case]\label{rmk:optimality} If radial functions $\{v_n\}_{n\ge 1} \subset H_0^1(B)$ satisfy the orthogonality \eqref{eq:Pauli} with  $\mathcal{L}_B$ replaced by the Dirichlet Laplacian $-\Delta_B$ in dimensions $d\ge 3$, then following the above proof of Lemma \ref{lem:radial} and using the well-known spectral properties of $-\Delta_B$ (see, e.g., \cite[Sec 6.4.4]{Triebel1992}) 	we find that
	$$
	\sum_{n\ge 1} |v_n(x)|^2 \le \sum_{k\ge 1} \frac{|\varphi_k(x)|^2}{\lambda_k}
	$$
where 
	$$
	\varphi_k (x) = \frac{1}{|x|^{\frac {d-2}2} \sqrt{ |\mathbb{S}^{d-1}|\int_0^1 r J^2_{\frac {d-2}2 }(z_k r) {\rm d} r} } J_{\frac {d-2}2 }(z_k |x|)
	$$
	are ($L^2$-normalized)  radial eigenfunctions of $-\Delta_B$ with eigenvalues $\lambda_k=z_k^2$, and $z_k$ is the $k$-th zero of $J_{\frac {d-2}{2}}$. From the asymptotic properties of Bessel function (see \cite[p. 40, 199]{Wat1944})
	$$
	J_{\frac {d-2}2 }(r) \sim_{r \to 0} \frac{1}{\Gamma(d/2)} \left( \frac r 2 \right)^{\frac {d-2}{2}}, \quad  J_{\frac {d-2}2 }(r) \sim_{r\to \infty} \sqrt{\frac{2}{\pi r}} \cos\left(r- \frac{(d-1)\pi}{4}\right) + O(r^{-1}), 
	$$
	we obtain
	$$
	J^2_{\frac{d-2}{2}}(r) \lesim \min (r^{d-2}, r^{-1}),\quad z_k \sim k, \quad \lambda_k \int_0^1 r J^2_{\frac {d-2}2 }(\sqrt{\lambda_k} r) {\rm d} r = \int_0^{\sqrt{\lambda_k}} r J^2_{\frac {d-2}2 }(r) {\rm d} r \sim \sqrt{\lambda_k}, 
	$$
and hence
\begin{align} \label{eq:Laplacian-Strauss-density}
	\sum_{n\ge 1} |v_n(x)|^2 \lesim_d \frac{1}{|x|^{d-2}} \sum_{ k \ge 1}  \frac{J^2_{\frac {d-2}2}(\sqrt{\lambda_k} |x|)}{ \sqrt{\lambda_k} } \lesim_d \frac{1}{|x|^{d-2}} \sum_{k \ge 1} \frac{1}{k} \min \left( (k|x|)^{d-2},  \frac{1}{k|x|} \right) \lesim_d  \frac{1}{|x|^{d-2}}.
\end{align}
The bound \eqref{eq:Laplacian-Strauss-density} is slightly better than \eqref{eq:pointwise-Bessel} as it does not contain a logarithmic weight. 
\end{remark}
	
%%%%%%%%%%%%%%%%%%%%%%%%%%%%%%%%%%%%%%%
%%%%%%%%%%%%%%%%%%%%%%%%%%%%%%%%%%%%%%%
%%%%%%%%%%%%%%%%%%%%%%%%%%%%%%%%%%%%%%%
%%%%%%%%%%%%%%%%%%%%%%%%%%%%%%%%%%%%%%%

\subsection{Conclusion of Theorem \ref{thm:CLR}}

Since $\cL-W \ge \cL - W_+$, by the min-max principle it suffices to assume that $W\ge 0$. Recall that $P$ is the projection onto radially symmetric functions in $L^2(\R^d)$ and $P^\bot =1-P$. Then, using $W\geq 0$, we have the Cauchy--Schwarz inequality
$$
W \le 2 P  W P + 2 P^\bot W P^\bot,
$$
and hence
\begin{align}\label{eq:split-radial-nonradial}
N(0,\mathcal L - W) \le N(0,P(\mathcal L - 2W) P) + N(0,P^\bot (\mathcal L - 2W) P^\bot). 
\end{align}

On the non-radial part, using the improved Hardy inequality in Lemma \ref{lem:improved-Hardy-1} we have
	$$
	P^\bot \mathcal L P^\bot \gesim_d P^\bot  (-\Delta)P^\bot.
	$$
	Therefore, by the standard CLR inequality,  
	\begin{align} \label{eq:CLR-nonradial}
	N(0,P^\bot (\mathcal L - 2W) P^\bot) \le N(0,P^\bot (C_d^{-1} (-\Delta) - 2W) P^\bot) \lesssim_d  \int_{\R^d} W(x)^{\frac d 2} {\rm d} x. 
	\end{align}
	
	On the radial part, using the Hoffmann--Ostenhof inequality \cite{HO77}, the  improved Hardy inequality in Lemma \ref{lem:improved-Hardy-2}, and the pointwise estimate in Lemma \ref{lem:radial}, we get
	\begin{equation}	\label{eq:Musina} \begin{aligned}
	\sum_{n \ge 1} \| \sqrt{\mathcal L}_B u_n\|_{L^2(B)}^2 &\ge \| \sqrt{\mathcal L}_B \sqrt{\rho}\|_{L^2(B)}^2 \\
	&\ge \frac 1 4 \int_B \frac{\rho(x)}{|x|^2 |\ln |x||^2} {\rm d}x \gtrsim_d \int_{B} \frac{\rho(x)^{\frac{d}{d-2}}}{(1+ |\ln |x||)^{1+\frac{d}{d-2}}} {\rm d}x
	\end{aligned} \end{equation}
	for all radial functions  $\{u_n\}_{n\ge 1}$ satisfying \eqref{eq:Pauli}. By a standard duality argument (see e.g. \cite{Frank14}), the kinetic inequality \eqref{eq:Musina} implies that 
	\begin{align}\label{eq:CLR-radial-ball}
	N(0, P(\mathcal{L}_B - 2\1_{B}W)P) \lesssim_d \int_{B} W(x)_+^{\frac d2} (1+|\ln |x||)^{d-1} {\rm d} x.  
	\end{align}	
	
Next, we use \eqref{eq:KVW}, namely 
\begin{align} \label{eq:KVW-app}
N(0, P(\mathcal{L}_{\R^d}-2W)P) \le 1 +  N(0, P(\mathcal{L}_{B}-2\1_{B}W)P) + N(0, P(\mathcal{L}_{B^c}-2\1_{B^c}W)P).
\end{align}
To control $N(0, P(\mathcal{L}_{B^c}-2\1_{B^c}W)P)$, we use an inversion in the unit sphere. Let us introduce some notation. Let $\mathcal Q$ be the form domain of the operator 
$$-r^{1-d}\partial_r \left( r^{d-1}\partial_r \right) - \frac{(d-2)^2}{4r^2}\quad \text{in }L^2((1,\infty),r^{d-1}dr)$$
with a Dirichlet boundary condition at $r=1$. Similarly, let $\tilde{\mathcal Q}$ be the form domain of the operator 
$$-s^{1-d}\partial_s \left( s^{d-1}\partial_s \right) - \frac{(d-2)^2}{4s^2}\quad \text{in }L^2((0,1),s^{d-1}ds)$$
with a Dirichlet boundary condition at $s=1$ (and a `Dirichlet boundary' condition at $s=0$).

\begin{lemma}\label{inversion}
	Assume that $u\in\mathcal Q$ and $\tilde u\in\tilde{\mathcal Q}$ are related by
	$$
	u(r) = r^{2-d} \tilde u(1/r)
	\qquad \text{for all}\ r\in (1,\infty) \,.
	$$
	Then
	$$
	\int_1^\infty \Big(u'(r)^2-\frac{(d-2)^2}{4r^2} u(r)^2\Big) r^{d-1}\,{\rm d}r = \int_0^1 \Big(\tilde u'(s)^2-\frac{(d-2)^2}{4s^2} \tilde u(s)^2 \Big) s^{d-1}\,{\rm d}s \,.
	$$
\end{lemma}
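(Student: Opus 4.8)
The plan is to verify the identity by a direct change of variables $s=1/r$, computing how each of the two terms transforms. First I would record the pointwise relation between $u$ and $\tilde u$: from $u(r)=r^{2-d}\tilde u(1/r)$ we differentiate to get
\[
u'(r) = (2-d) r^{1-d}\tilde u(1/r) - r^{2-d}\, r^{-2}\,\tilde u'(1/r) = (2-d) r^{1-d}\tilde u(1/r) - r^{-d}\,\tilde u'(1/r).
\]
Squaring and multiplying by $r^{d-1}$ produces three terms: a pure $\tilde u'^2$ term, a pure $\tilde u^2$ term, and a cross term proportional to $\tilde u(1/r)\tilde u'(1/r) = \tfrac12\frac{d}{dr}\big(\tilde u(1/r)^2\big)\cdot(-r^2)$. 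The cross term should be integrated by parts in $r$, and I expect the boundary contributions to vanish: at $r=1$ because of the Dirichlet condition (which is consistent on both sides under the inversion), and at $r=\infty$ because elements of $\mathcal Q$ decay — this is where the `Dirichlet boundary' condition at $s=0$ for $\tilde{\mathcal Q}$ enters, guaranteeing $\tilde u(s)^2 s^{d-2}\to 0$ as $s\to 0$.

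Next I would handle the Hardy term: since $u(r)^2 = r^{4-2d}\tilde u(1/r)^2$, we have $\frac{(d-2)^2}{4r^2}u(r)^2 r^{d-1} = \frac{(d-2)^2}{4} r^{1-d}\tilde u(1/r)^2$, and under $s=1/r$, ${\rm d}r = -s^{-2}{\rm d}s$, so $r^{1-d}{\rm d}r = s^{d-1}\cdot s^{-2}\cdot s^{2}\,{\rm d}s$... more carefully, $r^{1-d} = s^{d-1}$ and ${\rm d}r = -s^{-2}{\rm d}s$, giving $r^{1-d}\,{\rm d}r \leftrightarrow -s^{d-3}{\rm d}s$; I would track these powers carefully so that the Hardy term on the left matches exactly $\frac{(d-2)^2}{4s^2}\tilde u(s)^2 s^{d-1}$ on the right after the substitution. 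The pure $\tilde u'^2$ term from the square of $u'$ is $r^{-2d}\tilde u'(1/r)^2 r^{d-1} = r^{-d-1}\tilde u'(1/r)^2$, and under $s=1/r$ this becomes $s^{d+1}\tilde u'(s)^2 s^{-2}{\rm d}s = s^{d-1}\tilde u'(s)^2{\rm d}s$, which is the kinetic term on the right. The remaining pure $\tilde u^2$ term coming from $(2-d)^2 r^{2-2d}\tilde u(1/r)^2 r^{d-1} = (d-2)^2 r^{1-d}\tilde u(1/r)^2$ must be combined with the leftover from the integration by parts of the cross term to produce $-\frac{(d-2)^2}{4}$ times the Hardy integrand; the arithmetic $(d-2)^2 - (\text{cross-term contribution}) = -\frac{(d-2)^2}{4}$ plus the transformed Hardy term should close the identity.

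The main obstacle — really the only nontrivial point — is justifying the integration by parts of the cross term, i.e. showing the boundary terms at $r=1$ and $r=\infty$ vanish for \emph{all} $u\in\mathcal Q$, not just smooth compactly supported ones. I would first establish the identity for $u$ in a core (say $u\in C_c^\infty((1,\infty))$ after noting such functions correspond to nice $\tilde u$ supported away from $s=0$ and $s=1$), where everything is classical, and then extend to all of $\mathcal Q$ by density, using that both sides are continuous in the form norm — the left side is literally the quadratic form and the right side is bounded by it by the identity on the core. One technical subtlety is that the substitution map $u\mapsto\tilde u$ must be shown to be a bijection $\mathcal Q\to\tilde{\mathcal Q}$ preserving form domains; this follows from the isometry (up to the Hardy-weighted form norm) that the identity itself expresses, together with checking that the Dirichlet conditions transform correctly, so I would set this up as part of the argument rather than assuming it.
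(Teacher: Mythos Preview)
Your proposal is correct and follows exactly the route the paper takes: reduce by density to $u\in C_c^2((1,\infty))$ (equivalently $\tilde u\in C_c^2((0,1))$), where the boundary terms in the integration by parts of the cross term vanish trivially, and then verify the identity by the straightforward change of variables $s=1/r$ together with the arithmetic you outline. The paper in fact omits the computation entirely, so your write-up is more detailed than what appears there.
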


\begin{proof}
	By an approximation argument, we may assume the $u\in C^2_c(1,\infty)$ and $\tilde u\in C^2_c(0,1)$. Then the assertion follows by a straightforward computation, which we omit.
\end{proof}

\begin{corollary}\label{inversioncor}
	Assume that $W$ defined on $\overline{B}^c$ and $\tilde W$ defined on $B$ are related by
	$$
	W(x) = |x|^{-4} \tilde W(x/|x|^2)
	\qquad\text{for all}\ x\in \overline{B}^c \,.
	$$
	Then
	$$
	N(0,P(\mathcal L_{B^c}-\1_{B^c}W)P) = N(0,P(\mathcal L_{B}-\1_{B}\tilde W)P) \,.
	$$
\end{corollary}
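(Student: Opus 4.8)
The plan is to reduce Corollary~\ref{inversioncor} to Lemma~\ref{inversion} by passing from operators on $\R^d$ (or on $\overline B^c$, $B$) restricted to the radial subspace to one-dimensional operators on the radial variable, and then verifying that the change of variables intertwines the two one-dimensional problems together with their potentials. First I would use the unitary equivalence: for radial functions on $\R^d$, writing $u(x)=u(|x|)$, the quadratic form $\langle u,\mathcal L u\rangle$ equals $|\mathbb S^{d-1}|$ times the one-dimensional form
\begin{equation*}
\int_0^\infty\Big(u'(r)^2-\frac{(d-2)^2}{4r^2}u(r)^2\Big)r^{d-1}\,{\rm d}r,
\end{equation*}
acting on $L^2((0,\infty),r^{d-1}\,{\rm d}r)$, and the potential term $\langle u,Wu\rangle$ becomes $|\mathbb S^{d-1}|\int_0^\infty W(r)u(r)^2 r^{d-1}\,{\rm d}r$ (for radial $W$; in general one replaces $W$ by its spherical average, which only helps). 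Thus $N(0,P(\mathcal L_{B^c}-\1_{B^c}W)P)$ equals the number of negative eigenvalues of the one-dimensional operator on $L^2((1,\infty),r^{d-1}\,{\rm d}r)$ with form domain $\mathcal Q$, and similarly for the $B$-problem with form domain $\tilde{\mathcal Q}$.

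Next I would introduce the map $r\mapsto s=1/r$ and the substitution $u(r)=r^{2-d}\tilde u(1/r)$, and check two things. The kinetic parts match by Lemma~\ref{inversion} exactly. For the potential parts, substituting $s=1/r$ (so ${\rm d}s=-r^{-2}\,{\rm d}r$) gives
\begin{equation*}
\int_1^\infty W(r)\,u(r)^2\,r^{d-1}\,{\rm d}r=\int_1^\infty W(r)\,r^{4-2d}\tilde u(1/r)^2\,r^{d-1}\,{\rm d}r=\int_0^1 W(1/s)\,s^{2d-4}\tilde u(s)^2\,s^{1-d}\cdot s^{-2}\,{\rm d}s,
\end{equation*}
and one reads off that this equals $\int_0^1 \tilde W(s)\,\tilde u(s)^2\,s^{d-1}\,{\rm d}s$ precisely when $\tilde W(s)=s^{-4}W(1/s)$, i.e. $W(x)=|x|^{-4}\tilde W(x/|x|^2)$ in the radial variable, which is the stated relation. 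Since $u\mapsto\tilde u$ is a bijection $\mathcal Q\to\tilde{\mathcal Q}$ preserving both the kinetic and the potential quadratic forms, it is a unitary-type equivalence of the two Schr\"odinger forms, so the two operators have the same spectrum below $0$; counting with the min-max principle gives the claimed equality of $N(0,\cdot)$.

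I do not expect a genuine obstacle here; the only points requiring a little care are (i) checking that the correspondence $u\leftrightarrow\tilde u$ really maps the form domain $\mathcal Q$ onto $\tilde{\mathcal Q}$, including the behaviour of the Dirichlet condition at $r=1\leftrightarrow s=1$ and the admissibility condition at $r=\infty\leftrightarrow s=0$ (this is where one invokes the approximation by $C_c^2$ functions as in the proof of Lemma~\ref{inversion}, and the fact that $r^{2-d}$ is exactly the weight making the condition at infinity correspond to the Friedrichs `Dirichlet at $0$' condition for the critical Hardy operator); and (ii) noting that for non-radial $W$ one should first replace $W$ by its angular average, which does not increase $N(0,\cdot)$ on the radial subspace, so no generality is lost. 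Assembling these gives the corollary.
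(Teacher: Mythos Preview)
Your argument is correct and follows essentially the same route as the paper: reduce to the one-dimensional radial problem, invoke Lemma~\ref{inversion} for the kinetic form, verify the potential identity under $s=1/r$, and conclude via the variational characterization (the paper phrases this through Glazman's lemma, you through min-max; these are the same). One small imprecision: for non-radial $W$, replacing $W$ by its spherical average on the radial subspace is an \emph{equality}, not merely an inequality that ``only helps'' --- the paper states this more cleanly as ``the assertion only depends on the spherical means'' --- but this does not affect the validity of your proof.
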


\begin{proof}
	Clearly, the assertion only depends on the spherical means of $W$ and $\tilde W$, which we denote by $w$ and $\tilde w$. By Glazman's lemma (see e.g. \cite[Theorem 1.25]{FLW2023}), we have
	\begin{align*}
		& N(0,P(\mathcal L_{B^c}-\1_{B^c}W)P) \\
		& = \sup\Big\{ \dim M:\ \int_1^\infty \Big(u'(r)^2-\frac{(d-2)^2}{4r^2} u(r)^2\Big) r^{d-1}\,{\rm d}r < \int_1^\infty w(r) u(r)^2 r^{d-1}\,{\rm d}r \ \forall 0 \neq u\in M \Big\}
	\end{align*}
	and
	\begin{align*}
		& N(0,P(\mathcal L_{B}-\1_{B}\tilde W)P) \\
		& = \sup \Big\{ \dim \tilde M:\ \int_0^1 \Big(\tilde u'(s)^2-\frac{(d-2)^2}{4s^2} \tilde u(s)^2 \Big) s^{d-1}\,{\rm d}s < \int_0^1 \tilde w(s) \tilde u(s)^2 s^{d-1}\,{\rm d}s \ \forall 0 \neq \tilde u \in\tilde M \Big\} 
	\end{align*}
	where $M$ and $\tilde M$ run through subspaces in $\mathcal Q$ and $\tilde{\mathcal Q}$, respectively. The claimed equality therefore follows from the identity in the lemma, the identity
	$$
	\int_1^\infty w(r) u(r)^2 r^{d-1}\,{\rm d}r = \int_0^1 \tilde w(s) \tilde u(s)^2 s^{d-1}\,{\rm d}s, 
	$$
	together with the fact that the correspondence $u\mapsto \tilde u$ is bijective on form cores consisting of functions vanishing near the origin and near infinity, respectively.
\end{proof}

It is now easy to finish the proof of Theorem \ref{thm:CLR}. Clearly, we have, under the conditions of Corollary \ref{inversioncor},  
$$
\int_{B^c} W(x)^\frac d2 (1+|\ln |x||)^{d-1}\,{\rm d}x = \int_{B} \tilde W(x)^\frac d2 (1+|\ln |x||)^{d-1}\,{\rm d}x \,.
$$
This identity, together with Corollary \ref{inversioncor} and inequality \eqref{eq:CLR-radial-ball}, yields
\begin{equation}
	\label{eq:CLR-radial-balloutside}
	N(0,P(\mathcal L_{B^c}-2\1_{B^c}W)P) \lesssim_d \int_{B^c} W(x)^\frac d2 (1+|\ln |x||)^{d-1}\,{\rm d}x \,.
\end{equation}
Therefore, inserting \eqref{eq:CLR-radial-ball} and \eqref{eq:CLR-radial-balloutside} into \eqref{eq:KVW-app}, we obtain
$$
N(0, P(\mathcal{L}_{\R^d}-2W)P) \le 1 + C_d \int_{\R^d} W(x)_+^{\frac d2} (1+|\ln |x||)^{d-1} {\rm d}x\,.
$$
This, together with \eqref{eq:split-radial-nonradial} and \eqref{eq:CLR-nonradial}, completes the proof of Theorem \ref{thm:CLR}. 
\qed

%%%%%%%%%%%%%%%%%%%%%%%%%%%%%%%%%%%%%%%%%%%%%%%%
%%%%%%%%%%%%%%%%%%%%%%%%%%%%%%%%%%%%%%%%%%%%%%%%
%%%%%%%%%%%%%%%%%%%%%%%%%%%%%%%%%%%%%%%%%%%%%%%%
%%%%%%%%%%%%%%%%%%%%%%%%%%%%%%%%%%%%%%%%%%%%%%%%

\section{Nonlocal Case}\label{sec:Thm2}

\subsection{Fractional Hardy--Sobolev Inequalities} Let $0<s<\min(1,d/2)$ and let $\cL_s$ be defined in \eqref{eq:fractional-Hardy}.  Recall the following results of Tzirakis \cite{TJFA2016}. 

\begin{lemma}[{\cite[Theorem 3 and Theorem 5]{TJFA2016}}]\label{lem:fractional-HS-B}
For all $ v \in C^{\infty}_c(B)$, we have
\begin{equation*}
	\langle v,\cL_sv\rangle  \gesim_{s,d}  \left(\int_{B}    |v(x)|^{\frac{2 d}{d-2 s}} ( 1+|\ln |x||)^{-\frac{2(d-s)}{d-2s}}\mathrm{~d} x\right)^{\frac{d-2 s}{d}} ,
\end{equation*}
and 
\begin{equation*}
	\langle v,\cL_sv\rangle  \gesim_{s,d}    \int_{B} \frac{|v(x)|^{2} }{(1+ |\ln|x||^2) |x|^{2 s}} \mathrm{~d} x.
\end{equation*}
\end{lemma}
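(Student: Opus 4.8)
These are the inequalities of Tzirakis \cite{TJFA2016}; here is a route I would follow. Both estimates rest on the ground-state substitution for the fractional Hardy operator: writing $v(x)=|x|^{-(d-2s)/2}g(x)$ and invoking the Frank--Lieb--Seiringer ground-state representation \cite{FLSJAMS08} one has the identity
\[
\langle v,\cL_s v\rangle = \frac{a_{s,d}}{2}\int_{\R^d}\int_{\R^d}\frac{|g(x)-g(y)|^2}{|x-y|^{d+2s}}\,\frac{{\rm d}x\,{\rm d}y}{(|x|\,|y|)^{(d-2s)/2}}\ \ge 0,
\]
which displays $\cL_s$ as a nonnegative weighted fractional Dirichlet form in $g$. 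I would then pass to the half-cylinder $\mathcal C_+=(0,\infty)\times\Sph^{d-1}$ via the Emden--Fowler change of variables $x=e^{-t}\omega$ (so that $|x|<1$ corresponds to $t>0$), setting $h(t,\omega)=g(e^{-t}\omega)$. Under this map the scale-invariant measure $|x|^{-d}{\rm d}x$ becomes the flat measure ${\rm d}t\,{\rm d}\omega$, the weight $|x|^{-2s}$ is absorbed, $1+|\ln|x||$ becomes the linear weight $1+t$, and a direct computation turns the integral above into a form $\widetilde{\mathcal E}[h]$ on $\mathcal C_+$ that is invariant under translations in $t$ and whose kernel is comparable near the diagonal to the fractional-Laplacian kernel $\mathrm{dist}_{\mathcal C}(\cdot,\cdot)^{-(d+2s)}$ of the $d$-dimensional cylinder, with an extra factor decaying exponentially in $|t-\tau|$ off the diagonal. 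In these variables the two claims become $\widetilde{\mathcal E}[h]\gesim_{s,d}\big(\int_{\mathcal C_+}|h|^{2d/(d-2s)}(1+t)^{-2(d-s)/(d-2s)}\,{\rm d}t\,{\rm d}\omega\big)^{(d-2s)/d}$ and $\widetilde{\mathcal E}[h]\gesim_{s,d}\int_{\mathcal C_+}|h|^2(1+t^2)^{-1}\,{\rm d}t\,{\rm d}\omega$.

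For the Sobolev-type inequality I would cut $\mathcal C_+$ into unit slabs $S_k=(k,k+1)\times\Sph^{d-1}$. Dropping the nonnegative cross-slab contributions of the kernel gives $\widetilde{\mathcal E}[h]\ge\sum_{k\ge0}\widetilde{\mathcal E}_{S_k}[h]$, where $\widetilde{\mathcal E}_{S_k}$ retains only within-$S_k$ interactions; by translation invariance in $t$ together with the near-diagonal comparison of the kernel, a single local fractional Sobolev inequality on the $d$-dimensional slab yields $\widetilde{\mathcal E}_{S_k}[h]\gesim_{s,d}\|h\|_{L^{2d/(d-2s)}(S_k)}^2$ with a $k$-independent constant. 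Since the weight is comparable to the constant $(1+k)^{-2(d-s)/(d-2s)}$ on $S_k$, and the exponents are arranged so that the relevant power of $1+k$ is $\le1$, summing over $k$ and using the subadditivity of $x\mapsto x^{(d-2s)/d}$ reassembles the local estimates into exactly the weighted $L^{2d/(d-2s)}$ quantity on the right; undoing the change of variables gives the first inequality.

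For the logarithmic Hardy inequality I would instead expand $h$ into spherical harmonics, $h=\sum_{m\ge0}h_m$. As $\widetilde{\mathcal E}$ commutes with rotations of $\omega$, it splits as $\sum_m\widetilde{\mathcal E}_m[h_m]$, each $\widetilde{\mathcal E}_m$ a one-dimensional Fourier multiplier in $t$ of symbol $\widetilde\Psi_{s,m}$; the radial symbol $\widetilde\Psi_{s,0}$ is even, real-analytic near $0$, and vanishes there to second order (this is exactly the sharpness of the constant $\mathcal C_{s,d}$), while $\widetilde\Psi_{s,m}(0)>0$ for $m\ge1$ — the improved fractional Hardy inequality on the orthogonal complement of the radial functions. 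The modes $m\ge1$ are then harmless, $\widetilde{\mathcal E}_m[h_m]\gesim\|h_m\|_{L^2}^2\ge\int(1+t^2)^{-1}|h_m|^2$ uniformly, and the matter reduces to the one-dimensional statement $\int_0^\infty(1+t^2)^{-1}\phi(t)^2\,{\rm d}t\lesim\widetilde{\mathcal E}_0[\phi]$ for $\phi$ vanishing near $t=0$, where $\widetilde\Psi_{s,0}(\xi)\asymp\xi^2(1+\xi^2)^{s-1}$; this follows by splitting into low and high frequencies and using the ordinary Hardy inequality on $(1,\infty)$ for the former and the one-dimensional fractional Hardy inequality (Herbst \cite{Herbst-77}) for the latter, the vanishing of $\phi$ near $t=0$ being essential. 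A cleaner alternative, which I would try first, is to perform a second ground-state substitution directly on the ball, $v(x)=|x|^{-(d-2s)/2}(\ln(e/|x|))^{1/2}g(x)$, and apply the ground-state representation to $\cL_s$ relative to the positive function $u_1(x)=|x|^{-(d-2s)/2}(\ln(e/|x|))^{1/2}$; this gives $\langle v,\cL_s v\rangle\ge\int_B(\cL_s u_1/u_1)|v|^2\,{\rm d}x$, and it remains to establish the pointwise bound $\cL_s u_1\gesim_{s,d}|x|^{-2s}(\ln(e/|x|))^{-2}u_1$ on $B$, which for $s=1$ is the exact identity underlying \cite{FTJFA2002,MJFA09}.

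I expect the two ``engine'' steps to be the main obstacles. For the first inequality it is the localization of the \emph{nonlocal} form $\widetilde{\mathcal E}$: one must check, using the explicit transformed kernel and its off-diagonal exponential decay, that its within-slab part really dominates the fractional Sobolev seminorm of the slab with a constant independent of $k$, and that discarding the cross-slab interactions costs nothing at the endpoint (it does not, precisely because $2d/(d-2s)$ is critical for the order-$2s$ form in dimension $d$). For the second inequality the crux is the pointwise estimate for $(-\Delta)^s u_1$: unlike $s=1$ there is no closed form, and one needs the asymptotics $(-\Delta)^s u_1-\mathcal C_{s,d}|x|^{-2s}u_1=(c_{s,d}+o(1))|x|^{-2s}(\ln(e/|x|))^{-2}u_1$ as $|x|\to0$ with uniform error, supplemented by a lower bound for $\cL_s u_1/u_1$ away from the origin; carrying this out by first restricting to a small ball $B_\delta$ and then treating $B\setminus B_\delta$ separately is the technical heart of that route.
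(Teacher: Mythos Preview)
The paper gives no proof of this lemma: it is stated as a direct citation of \cite[Theorems 3 and 5]{TJFA2016} and is used as a black box in the proof of Lemma~\ref{lem:fractional-CLR-B} and in Remark~\ref{lem:fractional-HS-Bc}. There is therefore nothing to compare your argument against; your proposal already goes well beyond what the paper attempts.

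As a sketch of Tzirakis's results your outline is broadly reasonable --- the ground-state representation of \cite{FLSJAMS08} followed by the Emden--Fowler change of variables is indeed the natural framework, and you correctly isolate the two technical cores. One point worth flagging: in your slab localization for the first inequality, after dropping cross-slab interactions the within-slab form is only a seminorm, and a local fractional Sobolev inequality on a bounded slab ordinarily requires an additive $L^2$ term (constants are in the kernel). You would need either to retain enough of the cross-slab interaction to recover this, or to argue that the exponential off-diagonal decay of the transformed kernel still dominates the full $H^s$-seminorm on $S_k$ up to a constant; this is exactly the delicate step you already single out as an ``engine'', but it is a genuine gap as written rather than a routine verification. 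For the second inequality, both routes you propose (Fourier symbol analysis on the cylinder, or a second ground-state substitution with the logarithmic weight) are viable, and you have correctly identified the pointwise lower bound on $\cL_s u_1/u_1$ as the crux of the latter.
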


%%%%%%%%%%%%%%%%%%%%%%%%%%%%%%%%%%%%%%%%
%%%%%%%%%%%%%%%%%%%%%%%%%%%%%%%%%%%%%%%%

\subsection{Equivalence of Sobolev and CLR Inequalities}

In this part, we recall the equivalence of Sobolev and CLR inequalities from \cite{FLSICMP09}. 
Let $X$ be a separable measure space. We consider the measure on $X$ as fixed and denote integration with respect to this measure by ${\rm d} x$. By $L^{p}(X)$ for $1 \leq p \leq \infty$ we denote the usual $L^{p}$ space with respect to this measure.

Let $t$ be a non-negative quadratic form with domain dom $t$ that is closed in the Hilbert space $L^{2}(X)$ and let $T$ be the corresponding self-adjoint operator.

Throughout this paper we work under the following assumption, which depends on a parameter $1<\kappa<\infty$.

\begin{assumption}[Generalized Beurling--Deny conditions]\label{ass:bd}
	$\phantom{x}$
	\vspace{-.1truecm}
	\begin{enumerate}
		\item[(a)] 
		if $u,v\in\dom t$ are real-valued, then $t[u+iv]=t[u]+t[v]$,
		\item[(b)]
		if $u\in\dom t$ is real-valued, then $|u|\in\dom t$ and $t[|u|]\leq t[u]$.
		\item[(c)]
		there is a measurable, a.e.\ positive function $\mu$ such that, if $u\in\dom t$ is non-negative then $\min(u,\mu)\in\dom t$ and $t[\min(u,\mu)]\leq t[u]$. Moreover, there is a form core $\mathcal Q$ of $t$ such that $\mu^{-1}\mathcal Q$ is dense in $L^2(X, \mu^{2\kappa/(\kappa-1)} {\rm d} x)$.
	\end{enumerate}
\end{assumption}

\begin{theorem}[Equivalence of Sobolev and CLR inequalities in the presence of weights]\label{thm:clrweighted}
	Let Assumption \ref{ass:bd} be satisfied for some $\kappa>1$ and let $w$ be a nonnegative, measurable function on $X$ that is finite a.e. Then the following are equivalent:
	\begin{enumerate}
		\item[(i)] $T$ satisfies a weighted Sobolev inequality with exponent $q=2\kappa/(\kappa-1)$, that is, there is a constant $S>0$ such that for all $u\in\dom t$,
		\begin{equation}\label{eq:sobolevweighted}
			t[u] \geq S \left( \int_X |u|^q w^{-(q-2)/2}\,{\rm d} x \right)^{2/q} \,.
		\end{equation}
		\item[(ii)] $T$ satisfies a weighted CLR inequality with exponent $\kappa$, that is, there is a constant $L>0$ such that for all $0\leq V\in L^\kappa(X,w\,{\rm d} x)$,
		\begin{equation}\label{eq:thm:clrweighted}
			N(0,T-V) \leq L \int_X V^{\kappa} w \,{\rm d} x \,.
		\end{equation}
	\end{enumerate}
	The respective constants are related according to
	\begin{equation}\label{eq:CLRconst}
		S^{-\kappa} \leq L  \leq e^{\kappa-1} S^{-\kappa} \,.
	\end{equation}
\end{theorem}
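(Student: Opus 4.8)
The plan is to prove the equivalence of the weighted Sobolev inequality \eqref{eq:sobolevweighted} and the weighted CLR inequality \eqref{eq:thm:clrweighted} by reducing to the unweighted case of \cite{FLSICMP09} via a ground-state substitution (change of variable by the weight $\mu$). Concretely, I would introduce the unitary map $U\colon L^2(X,{\rm d}x)\to L^2(X,\mu^{-2}{\rm d}x)$ given by $Uf=\mu^{-1}f$, wait — more naturally, observe that conjugating the form $t$ by multiplication by $\mu$ produces a form on a weighted $L^2$ space whose natural Sobolev exponent is unweighted. The conditions (a)--(c) of Assumption \ref{ass:bd} are precisely what is needed so that the transformed form satisfies the (unweighted) Beurling--Deny conditions on the new measure space $(X,\tilde{\rm d}x)$ with $\tilde{\rm d}x = \mu^{2\kappa/(\kappa-1)}\,{\rm d}x$: condition (a) is preserved, condition (b) becomes the analogous condition for the transformed form, and condition (c), which says $\min(u,\mu)\in\dom t$ with $t[\min(u,\mu)]\le t[u]$, translates into $\min(v,1)$ belonging to the domain of the transformed form with decreasing energy, i.e. the usual truncation property. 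The density statement $\mu^{-1}\mathcal Q$ dense in $L^2(X,\mu^{2\kappa/(\kappa-1)}{\rm d}x)$ guarantees we have a genuine form core after the substitution.

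After this reduction, I would invoke the unweighted equivalence theorem from \cite{FLSICMP09}: for a form satisfying the Beurling--Deny conditions on $(X,\tilde{\rm d}x)$, the Sobolev inequality with exponent $q=2\kappa/(\kappa-1)$ (no weight) is equivalent to the CLR bound $N(0,\tilde T-\tilde V)\le L\int \tilde V^\kappa\,\tilde{\rm d}x$, with the constant relation $S^{-\kappa}\le L\le e^{\kappa-1}S^{-\kappa}$. The final step is bookkeeping: translate the transformed Sobolev and CLR inequalities back through the substitution. One checks that the transformed Sobolev inequality $\tilde t[v]\ge S(\int|v|^q\,\tilde{\rm d}x)^{2/q}$ becomes exactly \eqref{eq:sobolevweighted} with $u=\mu v$, since $\int|v|^q\mu^{2\kappa/(\kappa-1)}{\rm d}x=\int|u|^q\mu^{2\kappa/(\kappa-1)-q}{\rm d}x$ and $2\kappa/(\kappa-1)-q = -q+q = \ldots$ — here one verifies the exponent arithmetic: $q=2\kappa/(\kappa-1)$ so $2\kappa/(\kappa-1) - q = 0$; that is not right either, so the weight in the measure must be reconciled with the potential $w$. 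The correct matching is: $\tilde V = V$ and the relation between the new measure and $w$ comes from identifying $\mu$ with an appropriate power of $w$; since the statement allows arbitrary $w$, one should take $\mu$ in Assumption \ref{ass:bd}(c) to be $w^{1/2}$ up to normalization, and then $\tilde{\rm d}x = w^{\kappa/(\kappa-1)}\,{\rm d}x$, so that $\int \tilde V^\kappa\,\tilde{\rm d}x=\int V^\kappa w^{\kappa/(\kappa-1)}{\rm d}x$, which is not quite \eqref{eq:thm:clrweighted} — this indicates that the potential transforms too, $\tilde V = \mu^{-2}V\cdot(\text{something})$. I would carefully track that $T-V$ on $L^2({\rm d}x)$ corresponds under conjugation to $\tilde T - \mu^{-2}V$ on $L^2(\mu^2{\rm d}x)$, and then a further rescaling of the underlying measure to $\tilde{\rm d}x=\mu^{2\kappa/(\kappa-1)}{\rm d}x$ turns $\mu^{-2}V$ into the effective potential against which CLR is applied.

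The \textbf{main obstacle} I anticipate is exactly this exponent and weight bookkeeping: one must choose the identification of $\mu$ with $w$ (the natural choice being such that the combination $\mu^{-2}V$ integrated against $\mu^{2\kappa/(\kappa-1)}{\rm d}x$ reproduces $\int V^\kappa w\,{\rm d}x$, which forces a specific relation, e.g. $w=\mu^{2\kappa/(\kappa-1)}\cdot\mu^{-2\kappa}=\mu^{-2/(\kappa-1)}$, or the inverse), and one must verify that the three transformed Beurling--Deny conditions hold on the new measure space with this choice, especially that the truncation condition (c) — which in the weighted formulation involves the nontrivial function $\mu$ rather than the constant $1$ — becomes the standard truncation condition after substitution, and that the form core condition is preserved. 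Everything else is a direct citation of the unweighted result of \cite{FLSICMP09} together with the observation that $N(0,T-V)$ and the two functionals in \eqref{eq:sobolevweighted}, \eqref{eq:thm:clrweighted} are all invariant (or transform in a controlled way) under the unitary ground-state substitution, so that the constant relation \eqref{eq:CLRconst} carries over verbatim.
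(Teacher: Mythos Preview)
Your proposal has a genuine gap: you conflate the function $\mu$ from Assumption~\ref{ass:bd}(c) with the weight $w$ in the theorem. These are independent objects. The function $\mu$ is part of the data of the quadratic form $t$ --- it is fixed once $t$ is given --- whereas $w$ is an \emph{arbitrary} a.e.-finite nonnegative function. You cannot ``take $\mu$ in Assumption~\ref{ass:bd}(c) to be $w^{1/2}$''; the assumption does not give you that freedom. Your ground-state substitution $u=\mu v$ would (at best) strip the $\mu$ out of condition~(c), turning it into the standard truncation-by-$1$ property, but it does nothing to absorb the weight $w$, which is an entirely separate parameter. This is why your exponent bookkeeping keeps failing: you are trying to make one substitution do two unrelated jobs.

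The paper's approach sidesteps this entirely. Rather than transforming the form, it keeps the quadratic form $t$ \emph{unchanged} and instead changes the Hilbert space to $\mathfrak H = L^2(X, w^{-(q-2)/2}\,{\rm d}x)$. The point is that Assumption~\ref{ass:bd} involves only $t$ and its domain, not the ambient inner product, so it remains valid for $t$ viewed as a form in $\mathfrak H$; and the weighted Sobolev inequality \eqref{eq:sobolevweighted} becomes the \emph{unweighted} Sobolev inequality in $\mathfrak H$. One then applies the unweighted result of \cite{FLSICMP09} to the operator $A$ generated by $t$ in $\mathfrak H$, obtaining $N(0,A-U)\le e^{\kappa-1}S^{-\kappa}\int U^\kappa w^{-(q-2)/2}\,{\rm d}x$. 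The crucial observation is that Glazman's lemma characterizes $N(0,\cdot)$ purely through the quadratic form, so $N(0,A-U)=N(0,T-Uw^{-(q-2)/2})$; setting $V=Uw^{-(q-2)/2}$ and checking $(q-2)(\kappa-1)/2=1$ gives \eqref{eq:thm:clrweighted}. Some care is needed to show $t$ is closed in $\mathfrak H$ (first assume $w$ bounded, then approximate), but the core idea is this change of Hilbert space, not a ground-state substitution.
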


\begin{proof}
	This theorem for $w=1$ appears in the paper  \cite{FLSICMP09}. It is based on a method due to Li and Yau \cite{LiYa83} with an improvement in \cite{BlStRe} and generalizes a theorem of Levin and Solomyak \cite{LeSo}.
	
	We now prove the result for general $w$ as in the statement of the theorem. The implication (ii$\implies$i) follows by a standard application of H\"older's inequality. Thus, we only need to prove (i$\implies$ii). For the proof, we note that we may assume that $w$ is bounded away from zero. Indeed, once the implication is proved under this extra assumption, we can apply it to $w_\epsilon:= w+\epsilon$ in place of $w$. This $w_\epsilon$ still satisfies \eqref{eq:sobolevweighted} and it satisfies the extra condition $w_\epsilon\geq \epsilon$. Thus, we obtain \eqref{eq:thm:clrweighted} with $w_\epsilon$ in place of $w$. Since the constant is independent of $\epsilon$, we can let $\epsilon\to 0$ and obtain, by monotone convergence, the claimed inequality \eqref{eq:thm:clrweighted} with $w$.
	
	Thus, assume that $w$ is bounded away from zero. To better explain the strategy of the proof, let us first assume, in addition, that $w\in L^\infty(X)$. Then the Hilbert space $\mathfrak H := L^2(X,w^{-(q-2)/2}{\rm d} x)$ coincides, with equivalent norm, with the Hilbert space $L^2(X)$. We consider the quadratic form $t$ in the Hilbert space $\mathfrak H$. It is clearly nonnegative and, by our assumptions on $w$, it is closed. (It is in the proof of closedness that we use the assumption $w\in L^\infty(X)$.) Thus it generates a nonnegative operator $A$ in $\mathfrak H$. Moreover, it satisfies Assumption \ref{ass:bd}, since we are assuming that the corresponding assumption is satisfied for the original form $t$ in the Hilbert space $L^2(X)$. Applying \cite[Theorem 2.1]{FLSICMP09} to the operator $A$, we obtain that for any $0\leq U \in L^\kappa(X,w^{-(q-2)/2}{\rm d} x)$ one has
	$$
	N(0,A-U) \leq e^{\kappa-1} S^{-\kappa} \int_X U^\kappa \,w^{-\frac{q-2}{2}}\,{\rm d} x \,.
	$$
	At the point we notice that the quadratic form of the operator $A-U$ is
	$$
	t[v] - \int_X U |v|^2 w^{-\frac{q-2}{2}}\,{\rm d} x \,.
	$$
	Thus, by Glazman's lemma (see e.g. \cite[Theorem 1.25]{FLW2023}),
	$$
	N(0,A-U) = \sup\Big\{\! \dim M:\ t[v] - \int_X U |v|^2 w^{-\frac{q-2}{2}}\,{\rm d} x <0, \; \forall 0\neq v\in M \Big\} = N(0,T - Uw^{-\frac{q-2}{2}} ) \,,
	$$
	where we emphasize that the operator $A-U$ acts in $\mathfrak H$ and the operator $T - Uw^{-\frac{q-2}{2}}$ acts in $L^2(X,{\rm d} x)$. (We emphasize that the crucial point in this `trick' is that Glazman's lemma only sees the quadratic form, but not the norm in the Hilbert space.) Writing $V=Uw^{-\frac{q-2}{2}}$ we have shown that
	$$
	N(0,T-V) \leq e^{\kappa-1} S^{-\kappa} \int_X V^\kappa w\,{\rm d} x \,.
	$$
	This is the assertion (ii) that we wanted to prove under the assumption $w\in L^\infty(X)$.
	
	Now we consider the general case, where $w$ is finite a.e. As we mentioned before, we may also assume that $w$ is bound away from zero. The Hilbert space $\mathfrak H$ is defined as before, but now we only know that $L^2(X)\subset\mathfrak H$. Given $\delta>0$ we consider the quadratic form $v\mapsto t_\delta[v] := t[v] + \delta \int_X |v|^2\,{\rm d} x$ in the Hilbert space $\mathfrak H$ with form domain $\dom t$. Let us show that $t_\delta$ is closed in $\mathfrak H$. Thus, let $(v_j)\subset\dom t$ and $v\in\mathfrak H$ such that $v_j\to v$ in $\mathfrak H$ and such that $(v_j)$ is Cauchy with respect to $t_\delta$. Since $t$ is nonnegative, we infer that $(v_j)$ is Cauchy with respect to the norm in $L^2(X)$ and hence convergent in $L^2(X)$. Passing to a.e.-convergent subsequences and recalling that $w$ is finite a.e., it is easy to see that $v\in L^2(X)$ and that $v_j\to v$ in $L^2(X)$. Therefore the assumed closedness of $t$ implies that $v\in\dom t$ and that $t[v_j-v]\to 0$. Thus, we have shown that $t_\delta$ is closed in $\mathfrak H$.
	
	Let us denote by $A_\delta$ the nonnegative operator in $\mathfrak H$ generated by $t_\delta$. The form $t_\delta$ satisfies Assumptions \ref{ass:bd} and we deduce, as before, that
	$$
	N(0,A_\delta -U) \leq e^{\kappa-1} S^{-\kappa} \int_X U^\kappa \,w^{-\frac{q-2}{2}}\,{\rm d} x \,.
	$$
	Meanwhile, again by Glazman's lemma,
	$$
	N(0,A_\delta -U) = N(0,T+\delta - U w^{-\frac{q-2}{2}}) = N(-\delta, T-U w^{-\frac{q-2}{2}}) \,.
	$$
	Thus, we conclude that for $V\geq 0$ satisfying $\int_X V^\kappa w\,{\rm d} x<\infty$ the spectrum of $T-V$ in the interval $(-\infty,-\delta)$ is finite and
	$$
	N(-\delta,T-V) \leq e^{\kappa-1} S^{-\kappa} \int_X V^\kappa w\,{\rm d} x \,.
	$$
	Letting $\delta\searrow 0$, we obtain the claimed inequality.
\end{proof}

We return now to our operator $\mathcal L_s$ and prove the corresponding CLR inequality in the ball $B$. 

\begin{lemma}\label{lem:fractional-CLR-B} Let $W\ge 0$. Consider the operator $\mathcal{L}_s-W$ on $L^2(B)$ defined by the quadratic form restricted to functions vanishing outside of $B$. We have 
	$$
	N(0,\mathcal L_s - W) \lesssim_{s,d} \int_B W(x)^{\frac d{2s}} (1+ |\ln |x||)^{\frac{d-s}{s}}\,{\rm d} x \,.
	$$
\end{lemma}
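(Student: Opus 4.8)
The plan is to apply Theorem~\ref{thm:clrweighted} (the equivalence of weighted Sobolev and weighted CLR inequalities) with the operator $T=\mathcal{L}_s$ acting on $X=B$, taking the quadratic form $t$ to be $\langle v,\mathcal{L}_sv\rangle$ with form domain the closure of $C_c^\infty(B)$. The key input is the first inequality in Lemma~\ref{lem:fractional-HS-B}, which is exactly a weighted Sobolev inequality of the form \eqref{eq:sobolevweighted}: matching exponents, we need $q=\tfrac{2d}{d-2s}$, hence $\kappa=\tfrac{q}{q-2}=\tfrac{d}{2s}$, and we need $w^{-(q-2)/2}=(1+|\ln|x||)^{-2(d-s)/(d-2s)}$. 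Since $(q-2)/2 = 2s/(d-2s)$, this forces the weight to be $w(x)=(1+|\ln|x||)^{(d-s)/s}$. With this choice, the conclusion \eqref{eq:thm:clrweighted} reads
$$
N(0,\mathcal{L}_s-W)\ \lesssim_{s,d}\ \int_B W(x)^{d/(2s)}\,(1+|\ln|x||)^{(d-s)/s}\,{\rm d}x,
$$
which is precisely the asserted bound. So the entire content of the proof is the verification that the abstract hypotheses of Theorem~\ref{thm:clrweighted}, namely Assumption~\ref{ass:bd} with $\kappa=d/(2s)$, hold for $t$.

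The verification of Assumption~\ref{ass:bd} is where the real work sits. Parts (a) and (b) are the classical Beurling--Deny conditions: (a) is immediate from the bilinearity of the Gagliardo form in \eqref{defi-semi-norm}, and (b) follows from the pointwise inequality $\big||u(x)|-|u(y)|\big|\le|u(x)-u(y)|$ applied under the double integral (the Hardy term is unchanged since $|x|^{-2s}|u|^2=|x|^{-2s}|\,|u|\,|^2$). Part~(c) is the delicate one. One needs a positive weight function $\mu$ on $B$ such that the fractional-plus-Hardy form decreases under the truncation $u\mapsto\min(u,\mu)$, and such that $\mu^{-1}\mathcal{Q}$ is dense in $L^2(B,\mu^{2\kappa/(\kappa-1)}{\rm d}x)$. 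The natural candidate, guided by the known behaviour of $\mathcal{L}_s$-harmonic-type functions near the origin, is $\mu(x)=c\,|x|^{-(d-2s)/2}$ (possibly multiplied by a logarithmic correction, or cut off near $\partial B$), which is exactly the exponent governing the singularity in the fractional Hardy inequality \eqref{eq:fractional-Hardy}; the ground-state substitution $u=\mu\phi$ turns $\mathcal{L}_s$ into a weighted form in which truncations of $\phi$ are manifestly contractive. The density requirement $\mu^{-1}\mathcal{Q}$ dense in $L^2(B,\mu^{2\kappa/(\kappa-1)}{\rm d}x)$, with $2\kappa/(\kappa-1)=q=2d/(d-2s)$ and $\mu^{q}=|x|^{-d}$ up to constants, amounts to density of $C_c^\infty(B\setminus\{0\})$ (say) in $L^2(B,|x|^{-d}{\rm d}x)$, which holds since $|x|^{-d}$ is locally integrable away from the origin and we may exhaust $B\setminus\{0\}$.

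I expect part~(c) of Assumption~\ref{ass:bd} — specifically producing a $\mu$ for which the truncation property $t[\min(u,\mu)]\le t[u]$ genuinely holds for the \emph{full} operator $(-\Delta)^s-\mathcal{C}_{s,d}|x|^{-2s}$, not just for $(-\Delta)^s$ alone — to be the main obstacle. The subtlety is that the Hardy term is an \emph{attractive} (negative) potential, so decreasing $|u|$ by truncation decreases $-\int\mathcal{C}_{s,d}|x|^{-2s}|u|^2$ in the wrong direction; one must check that the gain in the nonlocal kinetic term dominates. The clean way around this is the ground-state representation: writing $t[u]=t[\mu\phi]$, one has an identity of the form $t[\mu\phi]=a_{s,d}\int\int|\phi(x)-\phi(y)|^2\,\mu(x)\mu(y)|x-y|^{-d-2s}\,{\rm d}x\,{\rm d}y + (\text{boundary/remainder terms that vanish for the critical }\mu)$, valid because $\mu$ is precisely the virtual ground state saturating fractional Hardy; then $t[\min(u,\mu)]$ corresponds to $\phi\mapsto\min(\phi,1)$, under which $|\phi(x)-\phi(y)|$ does not increase, giving the required monotonicity. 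Care is needed because on the bounded domain $B$ one works with the \emph{Dirichlet} form and $\mu=|x|^{-(d-2s)/2}$ is not quite the Dirichlet ground state; one either truncates $\mu$ smoothly near $\partial B$ (which only adds a bounded, manageable error) or invokes Tzirakis's improved Hardy inequality (the second bound in Lemma~\ref{lem:fractional-HS-B}) to absorb such errors. Once $\mu$ is fixed and (c) is checked, the proof is complete by the direct application of Theorem~\ref{thm:clrweighted} described above.
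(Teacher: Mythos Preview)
Your proposal is correct and follows essentially the same route as the paper: apply Theorem~\ref{thm:clrweighted} on $X=B$ with $q=\tfrac{2d}{d-2s}$, $\kappa=\tfrac{d}{2s}$, $w(x)=(1+|\ln|x||)^{(d-s)/s}$, the Sobolev input coming from Lemma~\ref{lem:fractional-HS-B}, and part~(c) of Assumption~\ref{ass:bd} verified with $\mu(x)=|x|^{-(d-2s)/2}$ via the ground state representation. Your only deviation is the worry about boundary effects near $\partial B$; this is unnecessary, since the ground state identity for $\mathcal{L}_s$ holds on all of $\R^d$ and the Dirichlet condition simply means $u$ (hence $\phi=u/\mu$, hence $\min(\phi,1)$) vanishes outside $B$, so $\min(u,\mu)$ automatically stays in the form domain with no truncation of $\mu$ required.
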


\begin{proof}
	We apply Theorem \ref{thm:clrweighted}  in the measure space $X=B$ with Lebesgue measure and with $t[u]=\|\mathcal L_s^{1/2} u\|^2$. Items (a) and (b) in Assumption \ref{ass:bd} are clearly satisfied. The first part of item (c) is satisfied with $\mu(x)=|x|^{-(d-2s)/2}$, as follows from the ground state substitution formula in the  \cite{FLSICMP09} on Hardy--Lieb--Thirring. The second part is satisfied, for any $1<\kappa<\infty$, since $C^\infty_c(B\setminus\{0\})$ is a form core of $\mathcal L_s$. Thanks to Lemma \ref{lem:fractional-HS-B}, the Sobolev inequality \eqref{eq:sobolevweighted} is satisfied with $q=\frac{2d}{d-2s}$ and $w=(1+ |\ln |x| |)^\frac{d-s}{s}$. The assertion follows now from the CLR inequality \eqref{eq:thm:clrweighted}.
\end{proof}

Similarly as in the previous section, one can use the inversion in the unit sphere to derive a corresponding inequality in the complement of the unit ball.

\begin{lemma}\label{lem:fractional-CLR-Bc} Let $W\ge 0$. For the operator $\mathcal{L}_s-W$ defined on $L^2(\overline{B}^c)$ with Dirichlet boundary conditions, we have
	$$
	N(0,\mathcal L_s - W) \lesssim_{s,d} \int_{B^c} W(x)^{\frac d{2s}} (1+ |\ln |x||)^{\frac{d-s}{s}}\,{\rm d} x \,.
	$$
\end{lemma}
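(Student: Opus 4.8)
The plan is to reduce the claim on $\overline{B}^c$ to Lemma~\ref{lem:fractional-CLR-B} on $B$ by the same conformal inversion $x\mapsto x/|x|^2$ used in the local case. Concretely, set $\tilde W(y) = |y|^{-4s}\,W(y/|y|^2)$ for $y\in B$, so that the Kelvin-type transform $u\mapsto \tilde u$ matching $u(x)=|x|^{2s-d}\,\tilde u(x/|x|^2)$ is the right substitution. First I would record the fractional analogue of Lemma~\ref{inversion}: that this substitution is an isometry (up to the appropriate Jacobian) between the quadratic form of $\cL_s$ restricted to functions on $\overline{B}^c$ vanishing at the boundary and the quadratic form of $\cL_s$ restricted to functions on $B$ vanishing at the boundary, \emph{and} that it intertwines the potential terms, i.e. $\int_{B^c} W|u|^2\,{\rm d}x = \int_B \tilde W|\tilde u|^2\,{\rm d}y$. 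Granting this, Glazman's lemma (as in the proof of Corollary~\ref{inversioncor}) gives $N(0,\cL_s-\1_{B^c}W) = N(0,\cL_s-\1_B\tilde W)$ exactly, after which Lemma~\ref{lem:fractional-CLR-B} bounds the right-hand side by $\int_B \tilde W(y)^{d/2s}(1+|\ln|y||)^{(d-s)/s}\,{\rm d}y$, and the elementary change of variables $y=x/|x|^2$ turns this into $\int_{B^c} W(x)^{d/2s}(1+|\ln|x||)^{(d-s)/s}\,{\rm d}x$, which is the asserted bound.

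The main obstacle is the first step: unlike the local Dirichlet form, the fractional form $h_s$ in \eqref{defi-semi-norm} is a genuinely nonlocal double integral, so the conformal covariance of $\cL_s$ under inversion is not a one-line computation. The clean way around this is to invoke the known conformal/Kelvin covariance of the fractional Laplacian: under the inversion $x\mapsto x^\ast = x/|x|^2$, one has the pointwise identity $\big((-\Delta)^s u\big)(x) = |x|^{-d-2s}\big((-\Delta)^s u^\ast\big)(x/|x|^2)$ for $u^\ast(y)=|y|^{2s-d}u(y/|y|^2)$, together with $|x\cdot|^{2s}$ transforming appropriately, which implies that $\cL_s$ is mapped to $\cL_s$ and, at the level of quadratic forms, that $h_s[u] - \mathcal C_{s,d}\int |x|^{-2s}|u|^2 = h_s[u^\ast] - \mathcal C_{s,d}\int |y|^{-2s}|u^\ast|^2$ when the inversion interchanges $\overline{B}^c$ and $B$. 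One must also check that the inversion interchanges the relevant form cores (functions in $C_c^\infty$ supported away from $0$ and from $\infty$, respectively), which is immediate since $x\mapsto x/|x|^2$ is a diffeomorphism of $\R^d\setminus\{0\}$ swapping $B\setminus\{0\}$ and $\overline{B}^c$.

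A subtlety I would flag is that, strictly speaking, imposing Dirichlet conditions on $\overline{B}^c$ for a nonlocal operator means restricting the form to functions vanishing identically outside $\overline{B}^c$; the inversion then produces functions vanishing identically outside $B$, which is exactly the setting of Lemma~\ref{lem:fractional-CLR-B}, so the two restricted forms correspond and no boundary term is lost. I would therefore organize the proof as: (i) state the inversion identity for $\cL_s$ as a short lemma (proved by the Kelvin covariance of $(-\Delta)^s$, with the routine computation omitted or cited), (ii) apply Glazman's lemma to convert it into the equality of eigenvalue counts exactly as in Corollary~\ref{inversioncor}, and (iii) combine with Lemma~\ref{lem:fractional-CLR-B} and the change of variables in the integral. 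The genuinely new analytic input has all been placed in Lemmas \ref{lem:fractional-HS-B} and \ref{lem:fractional-CLR-B}; the present lemma is then a transport-of-structure argument, with the conformal covariance of the fractional Laplacian being the one nontrivial ingredient to invoke carefully.
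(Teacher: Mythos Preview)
Your proposal is correct and follows essentially the same route as the paper: define $\tilde W(y)=|y|^{-4s}W(y/|y|^2)$, use the Kelvin transform $u(x)=|x|^{2s-d}\tilde u(x/|x|^2)$ together with the conformal covariance of $(-\Delta)^s$ (the paper cites \cite{Li} and \cite{FWJFA2012} for the identity $\langle u,(-\Delta)^s u\rangle=\langle\tilde u,(-\Delta)^s\tilde u\rangle$ on form cores) and a direct change of variables for the Hardy term to obtain $\langle u,\mathcal L_s u\rangle=\langle\tilde u,\mathcal L_s\tilde u\rangle$, then conclude via Glazman's lemma and Lemma~\ref{lem:fractional-CLR-B} exactly as you outline. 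Your remarks about form cores and the meaning of Dirichlet conditions in the nonlocal setting match the paper's treatment.
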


\begin{proof}
	We keep the notation $\mathcal L_s$ for the operator in $L^2(\overline{B}^c)$ that appears in the lemma and denote the one in $L^2(B)$ from Lemma \ref{lem:fractional-CLR-B} temporarily by $\tilde{\mathcal L}_s$. Assume that functions $u$ and $\tilde u$ in the form domains of $\mathcal L_s$ and $\tilde{\mathcal L}_s$ are related by
	$$
	u(x) = |x|^{2s-d} \tilde u (x/|x|^2)
	\qquad\text{for all}\ x\in\R^d \setminus \{0\}.
	$$
	We claim that
	\begin{equation}
		\label{eq:inversionfrac}
		\langle u, \mathcal L_s u \rangle = \langle \tilde u, \tilde{\mathcal L}_s \tilde u \rangle \,.
	\end{equation}
	Since $C^2_c(B\setminus\{0\})$ and $C^2_c(\overline B^c)$ are form cores for the relative operators, it suffices to assume that $u$ and $\tilde u$ belong to these sets. For such functions it is well known that
	$$
	\langle u,(-\Delta)^s u \rangle = \langle \tilde u , (-\Delta)^s \tilde u \rangle \,.
	$$
	This appears, in a dual form, for instance in \cite{Li}; see also  \cite[Lemma 2.2]{FWJFA2012}. Additionally, by changing variable $y =  \frac{x}{|x|^2}$, we have
	\begin{align*}
		\int_{B^c} \frac{|u(x)|^2}{|x|^{2s}} {\rm d}x  =  \int_{B^c} \frac{|x|^{2(2s-d)} |\tilde u \left(\frac{x}{|x|^2} \right)|^2}{|x|^{2s}} {\rm d}x = \int_B      \frac{|\tilde u(y)|^2}{|y|^{2s}} {\rm d}y \, ,  \end{align*}
	thus proving the claimed identity \eqref{eq:inversionfrac}.
	
	Identity \eqref{eq:inversionfrac} is the analogue of Lemma \ref{inversion}. As in the proof of Corollary \ref{inversioncor}, we deduce that
	$$
	N(0,\mathcal L_s - W) = N(0,\tilde{\mathcal L}_s - \tilde W) \,,
	$$
	where
	$$
	\tilde W(y) := |y|^{-4s} W(y/|y|^2)
	\qquad\text{for all}\ y\in B \backslash\{0\} \,.
	$$
	Since
	$$
	\int_B \tilde W(y)^\frac d{2s} (1+|\ln|y||)^\frac{d-s}{s}\,{\rm d}y = \int_{B^c} W(x)^{\frac d{2s}} (1+ |\ln |x||)^{\frac{d-s}{s}}\,{\rm d} x \,,
	$$
	we deduce the inequality in the lemma from that in Lemma \ref{lem:fractional-CLR-B}.
\end{proof}

\begin{remark}\label{lem:fractional-HS-Bc}
	We record another use of the inversion method employed in the previous proof, which will be useful later. Namely, we have the bound
	\begin{equation*}
		\langle   u, (-\Delta )^s u   \rangle   \ge      	\mathcal{C}_{s, d} \int_{B^c} \frac{|u(x)|^{2}}{|x|^{2 s}} \mathrm{~d} x+ C_{s,d} \int_{B^c} \frac{|u(x)|^{2} }{(|\ln|x||^2+1) |x|^{2 s}} \mathrm{~d} x.
	\end{equation*}
	Indeed, this follows from the second bound in Lemma \ref{lem:fractional-HS-B} applied to the function $v=\tilde u$ from the previous proof.
\end{remark}

%%%%%%%%%%%%%%%%%%%%%%%%%%%%%%%%%%%%%%%%
%%%%%%%%%%%%%%%%%%%%%%%%%%%%%%%%%%%%%%%%

\subsection{Localization}  Consider smooth partition functions $\chi, \eta:\R^d \to [0,1]$ satisfying 
 \begin{equation}\label{eq:property-chi-eta}
  \chi^2(x ) + \eta^2(x) \equiv 1, \quad \supp \chi \subset   \{|x| \le 2\}, \quad \supp \eta \subset   \{ |x| \ge 1\}.
 \end{equation}
 
The following localization estimate for the fractional Laplacian is of independent interest. 
 
		\begin{lemma}\label{propo-L-s-dominate} Let $0<s<\min(1,d/2)$ and $\delta  \in (0,1)$. Then for every $\delta>0$, there exists $C=C(s,d,\delta)>0$ such that
$$
\mathcal{L}_s  \ge \chi ( \mathcal{L}_s - C)  \chi + (1 - \delta)  \eta ( \mathcal{L}_s - C \1_{B_3} )  \eta,
$$
where $\1_{B_3}$ is the indicator function of $B_3=\{|x| < 3\}\subset \R^d$. 
\end{lemma}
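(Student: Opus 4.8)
The plan is to prove the operator inequality
\[
\mathcal{L}_s \ge \chi(\mathcal{L}_s - C)\chi + (1-\delta)\,\eta(\mathcal{L}_s - C\1_{B_3})\eta
\]
by controlling the commutator-type errors that appear when one localizes the nonlocal form $h_s$. The first step is the standard IMS-type decomposition for the fractional Laplacian: for any $u$ in the form domain one has
\[
h_s[u] = h_s[\chi u] + h_s[\eta u] - a_{s,d}\int_{\R^d}\int_{\R^d}\frac{\big(\chi(x)-\chi(y)\big)^2+\big(\eta(x)-\eta(y)\big)^2}{|x-y|^{d+2s}}|u(x)|^2\,{\rm d}x\,{\rm d}y,
\]
using $\chi^2+\eta^2\equiv 1$. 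Write $L(x) := a_{s,d}\int_{\R^d}\frac{(\chi(x)-\chi(y))^2+(\eta(x)-\eta(y))^2}{|x-y|^{d+2s}}\,{\rm d}y$ for the localization error function; since $\chi,\eta$ are Lipschitz with bounded support (resp.\ bounded complement of support), a routine splitting of the $y$-integral into $|x-y|<1$ and $|x-y|\ge 1$ shows $L\in L^\infty(\R^d)$ and moreover $L$ is supported in the transition region $\{1\le |x|\le 2\}$ (where $\nabla\chi,\nabla\eta$ are nonzero), with $L(x)\to 0$ as $|x|\to\infty$. The Hardy potentials localize exactly: $\mathcal{L}_s = h_s - \mathcal{C}_{s,d}|x|^{-2s}$, and $\mathcal{C}_{s,d}|x|^{-2s}\chi(x)^2 u$ and $\mathcal{C}_{s,d}|x|^{-2s}\eta(x)^2 u$ split the Hardy term with no error since multiplication operators commute. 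Hence
\[
\langle u,\mathcal{L}_s u\rangle = \langle \chi u, \mathcal{L}_s \chi u\rangle + \langle \eta u, \mathcal{L}_s \eta u\rangle - \int_{\R^d} L(x)|u(x)|^2\,{\rm d}x.
\]

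The second step handles the error term. On $\supp L \subset \{1\le|x|\le 2\}$ we have $\eta\equiv 1$ near, say, $|x|\ge 2$ but not on all of $\supp L$; however $\supp L\subset B_3$, so $L(x)|u(x)|^2 \le \|L\|_\infty \1_{B_3}(x)|u(x)|^2$. The key point is to absorb this into the two localized pieces with the right constants: we need a constant $C$ large enough that
\[
-\int L|u|^2 \ge -C\int_{B_3}\chi^2|u|^2\,\cdot\,\alpha - C\delta\int_{B_3}\eta^2|u|^2
\]
for an appropriate split. Since $\chi^2+\eta^2=1$, it suffices to take $C\ge\|L\|_\infty$ and distribute: $-\int L|u|^2 \ge -\|L\|_\infty\int_{B_3}(\chi^2+\eta^2)|u|^2 = -\|L\|_\infty\int_{B_3}\chi^2|u|^2 - \|L\|_\infty\int_{B_3}\eta^2|u|^2$. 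Thus with $C:=\|L\|_\infty$ we get
\[
\langle u,\mathcal{L}_s u\rangle \ge \langle\chi u,(\mathcal{L}_s - C\1_{B_3})\chi u\rangle + \langle\eta u,(\mathcal{L}_s - C\1_{B_3})\eta u\rangle.
\]
The first term is even better than claimed since $\mathcal{L}_s\ge \chi(\mathcal{L}_s-C)\chi$ when $C\1_{B_3}\le C$; in fact we may replace $C\1_{B_3}$ by $C$ in the $\chi$-piece freely. It remains to convert the $\eta$-piece into the required $(1-\delta)\eta(\mathcal{L}_s - C\1_{B_3})\eta$ form.

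The third step is where the factor $(1-\delta)$ and the true subtlety enter. On the support of $\eta$, the function $u\mapsto \eta u$ has support in $\{|x|\ge 1\}$, but the fractional Hardy inequality $\mathcal{L}_s\ge 0$ with the \emph{sharp} constant $\mathcal{C}_{s,d}$ holds only on all of $\R^d$; the restriction to $\overline{B}^c$ loses a bit unless one is careful, which is exactly why a small loss $\delta$ is tolerated. More precisely, the clean bound above already gives $\langle u,\mathcal{L}_s u\rangle \ge \langle\chi u,(\mathcal{L}_s-C)\chi u\rangle + \langle\eta u,(\mathcal{L}_s - C\1_{B_3})\eta u\rangle$, and since the second summand is $\ge (1-\delta)$ times itself plus $\delta\langle\eta u,(\mathcal{L}_s-C\1_{B_3})\eta u\rangle$, it suffices to show the leftover $\delta$-multiple is nonnegative up to absorbing it, i.e.\ $\langle\eta u,\mathcal{L}_s\eta u\rangle \ge C\int_{B_3}\eta^2|u|^2$ would not hold, so instead one argues: by Remark \ref{lem:fractional-HS-Bc} (the inversion-derived fractional Hardy--Sobolev inequality on $B^c$), the form $\langle \cdot,\mathcal{L}_s\,\cdot\rangle$ restricted to functions supported in $\overline{B}^c$ controls a further positive weighted $L^2$-term, and a compactness/Poincaré-type argument on the annulus $\{1\le|x|\le 3\}$ shows that, after giving up the factor $\delta$ of the positive fractional energy, the constant $C$ can be chosen (depending on $\delta$) so that $\delta\langle\eta u,\mathcal{L}_s\eta u\rangle$ dominates $C\int_{B_3}\eta^2|u|^2 - (\text{the }\chi\text{-localized Hardy contribution already accounted for})$. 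The main obstacle I anticipate is precisely this last absorption: making rigorous that a fixed fraction $\delta$ of the fractional kinetic energy of $\eta u$ on the unbounded domain $\{|x|\ge 1\}$ suffices to beat a large multiple $C(\delta)$ of its $L^2$-mass on the \emph{bounded} annulus $\{1\le|x|\le 3\}$ — this is morally a statement that $\mathcal{L}_s$ on $\overline{B}^c$ has compact resolvent relative to the bounded-region mass, or more elementarily follows from the strict positivity of $\mathcal{L}_s$ on $\overline{B}^c$ combined with the fractional Sobolev embedding on the annulus; the nonlocality means one must be cautious that "energy on $\{|x|\ge 1\}$" genuinely controls "mass on $\{1\le|x|\le 3\}$," which it does because the fractional Hardy inequality on $\overline B^c$ from Remark \ref{lem:fractional-HS-Bc} is \emph{strict} (it has the extra logarithmically-weighted positive term), giving the needed gap. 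Everything else is routine kernel estimates and the algebraic identity $\chi^2+\eta^2=1$.
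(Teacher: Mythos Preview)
Your argument has a genuine gap at the second step. The IMS error for the fractional Laplacian is \emph{not} a multiplication operator: the correct identity is
\[
h_s[\chi u]+h_s[\eta u]-h_s[u]=\langle u,\mathcal H u\rangle
=a_{s,d}\!\iint \frac{(\chi(x)-\chi(y))^2+(\eta(x)-\eta(y))^2}{|x-y|^{d+2s}}\,\mathrm{Re}\big(\overline{u(x)}u(y)\big)\,{\rm d}x\,{\rm d}y,
\]
which involves $u(x)u(y)$, not $|u(x)|^2$. One can indeed bound $\langle u,\mathcal H u\rangle\le \int L(x)|u(x)|^2\,{\rm d}x$ with your $L$, so the inequality you need survives. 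The fatal claim, however, is that $L$ is supported in the annulus $\{1\le|x|\le 2\}$. This is false and is precisely where the nonlocal case differs from the local one: for $|x|>2$ one has $\chi(x)=0$, $\eta(x)=1$, so $(\chi(x)-\chi(y))^2+(\eta(x)-\eta(y))^2=\chi(y)^2+(1-\eta(y))^2$, which is nonzero whenever $|y|<2$. Thus $L(x)\sim |x|^{-(d+2s)}$ as $|x|\to\infty$; the error has a genuine long-range tail and cannot be bounded by $\|L\|_\infty\1_{B_3}$. Your inequality $\langle u,\mathcal L_s u\rangle\ge\langle\chi u,(\mathcal L_s-C)\chi u\rangle+\langle\eta u,(\mathcal L_s-C\1_{B_3})\eta u\rangle$ therefore does not follow, and the subsequent discussion of where the $\delta$ enters is built on sand.

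The paper's proof faces exactly this tail and handles it directly: it bounds the kernel $H(x,y)$ by a compactly supported singular piece plus cross terms of the form $\1_{B_3}(x)\1_{B_3^c}(y)(1+|y|)^{-(d+2s)}$ (and its symmetric partner), then controls the resulting contribution $\|\1_{B_3^c}u/(|y|^s(1+|\ln|y||))\|_{L^2}$ via the fractional Hardy inequality with remainder on $\overline B^c$ (Remark~\ref{lem:fractional-HS-Bc}). That is where the $\delta$ actually comes from: a small multiple $\delta\langle\eta u,\mathcal L_s\eta u\rangle$ absorbs the long-range tail of the localization error, with the near part going into $C\|\1_{B_3}u\|_{L^2}^2$. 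Your instinct that Remark~\ref{lem:fractional-HS-Bc} is the key ingredient is right, but it is needed to kill the tail of $L$ outside $B_3$, not for any compactness argument on the annulus.
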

		
		\begin{proof} For every $u \in C^\infty_c (\R^d)$, we have the IMS formula
	\begin{equation} \label{eq:IMS-decomposition}
		\langle  u, \mathcal{L}_s u \rangle = \langle u, (\chi\mathcal{L}_s  \chi) u  \rangle + \langle u, (\eta \mathcal{L}_s \eta) u \rangle -  \langle u, \mathcal{H} u\rangle
	\end{equation}
	where $\mathcal{H}$ is  the bounded operator on $L^2(\R^d)$ with integral kernel
	$$ H(x,y) = a_{s,d}   \frac{  (\chi(x) - \chi(y))^2 + (\eta(x) - \eta(y))^2 }{\left| x-y\right|^{d+2s} } .$$
	This formula is due to Michael Loss and appeared in \cite{LiYa}.

By the triangle inequality we have the pointwise estimate 
	$$
	0\le H(x,y){\lesssim_{s,d}}  \frac{ \1_{B_3}(x) \1_{B_3} (y) }{|x-y|^{d+2s-2}} +  \frac{ \1_{B_3}(x)  \1_{B_3^c}(y)}{(1+ |y|)^{d+2s}} +  \frac{\1_{B_3}(y)  \1_{B_3^c}(x)}{(1+ |x|)^{d+2s}} \quad \text{for } x \neq y.
	$$	
	When $d\geq 2$, combining with the Hardy--Littlewood--Sobolev (HLS) inequality and the H\"older inequality, we get
	\begin{align*}
	\left\langle u, \mathcal{H} u\right\rangle &\lesssim_{s,d} \int_{B_3}\int_{B_3} \frac{|u(x)||u(y)|}{|x-y|^{d+2s-2}} {\rm d} x \, {\rm d} y + \int_{B_3}\int_{B_3^c} \frac{|u(x)||u(y)|}{(1+ |y|)^{d+2s}} {\rm d} x \, {\rm d} y \\
	&\lesssim_{s,d} \| \1_{B_3} u \|_{L^{\frac{2d}{d-2s+2}} (\R^d) }^{2} + {\| \1_{B_3} u \|_{L^{1} (\R^d) } } \left\|  \frac{\1_{B_3^c}(y) u(y)}{|y|^s ( 1 + |\ln |y||)} \right\|_{L^2(\R^d)} \left\| \frac{|y|^s ( 1 + |\ln |y||)} { (1+|y|)^{d+2s}}\right\|_{L^2(\R^d)} \\
	&\lesssim_{s,d}  \delta \left\|  \frac{\1_{B_3^c}(y) u(y)}{|y|^s ( 1 + |\ln |y||)} \right\|_{L^2(\R^d)}^2 + (1+\delta^{-1})  \| \1_{B_3} u \|_{L^{2} (\R^d) }^{2}
	\end{align*} 
	for all $\delta>0$. In dimension $d=1$, the exponent $d+2s-2$ is negative for $s< \frac 1 2$, and hence instead of the HLS inequality we can use $|x-y|^{-(d+2s-2)}\lesssim 1$ for $x,y\in B_3$, leading to the same final estimate. 	

By the Hardy inequality with remainder term in Remark \ref{lem:fractional-HS-Bc},
$$
	\left\|  \frac{\1_{B_3^c}(y) u(y)}{|y|^s ( 1 + |\ln |y||)} \right\|_{L^2(\R^d)}^2  \le \left\|  \frac{\eta(y) u(y)}{|y|^s ( 1 + |\ln |y||)} \right\|_{L^2(\R^d)}^2 \lesssim_{s,d}  \langle u,(\eta \mathcal{L}_s  \eta) u \rangle. 
$$
Thus in summary, for every $\delta\in (0,1)$ we have 
$$
	\left \langle u, {\mathcal H} u \right\rangle \le \delta  \langle (\eta \mathcal{L}_s  \eta) u,u \rangle + C_{s,d,\delta}  \| \1_{B_3} u \|_{L^{2} (\R^d) }^{2}.
$$
The conclusion follows by inserting the latter bound in  \eqref{eq:IMS-decomposition}. 
\end{proof}
		
%%%%%%%%%%%%%%%%%%%%%%%%%%%%%%%%%%%%%%%%
%%%%%%%%%%%%%%%%%%%%%%%%%%%%%%%%%%%%%%%%

\subsection{Conclusion of Theorem \ref{thm:fractional}} It suffices to assume that $W\ge 0$. Let $\chi^2+\eta^2=1$ as in \eqref{eq:property-chi-eta}. By Lemma \ref{propo-L-s-dominate}, we have the following quadratic form estimate on $L^2(\R^d)$
 \begin{align*}
 	\mathcal{L}_s -W  \ge  \chi  \Big( \mathcal{L}_s  - W  -  C \Big) \chi + \frac{1}{2} \eta \Big( \mathcal{L}_s  -2 W  -  C \1_{B_3} \Big)  \eta . 
 \end{align*}
 Therefore, 
  $$ N(0,\mathcal{L}_s - W ) \le   N(0, \chi  ( \mathcal{L}_s  - W  -  C) \chi  )  + N(0, \eta ( \mathcal{L}_s  - 2W  -  C \1_{B_3})  \eta   ).$$
Using Lemma \ref{lem:fractional-CLR-B} (with $B$ replaced by $B_3$, the result remains true with a possible change of the implicit constant), we have
	$$
	N(0, \chi  ( \mathcal{L}_s  - W  -  C) \chi  )  \lesssim_{s,d} \int_{\R^d} \Big[ \chi^2(x) (W(x) + C) \Big]^{\frac d{2s}} (1+ |\ln |x||)^{\frac{d-s}{s}}\,{\rm d} x \,.
	$$
Similarly, by Lemma \ref{lem:fractional-CLR-Bc}, 
$$
	N(0, \eta  ( \mathcal{L}_s  - 2W  -  C \1_{B_3}) \eta  )  \lesssim_{s,d} 
	\int_{\R^d} \Big[ \eta^2(x) (W(x) + C \1_{B_3}(x)) \Big]^{\frac d{2s}} (1+ |\ln |x||)^{\frac{d-s}{s}}\,{\rm d} x \,.
	$$
Thus we conclude that
\begin{align*}
 	N(0,\mathcal{L}_s - W ) &\lesssim_{s,d} \int_{\R^d} \Big[ \chi^2(x) (W(x) + C) \Big]^{\frac d{2s}} (1+ |\ln |x||)^{\frac{d-s}{s}}\,{\rm d} x \\
	&\quad + \int_{\R^d} \Big[ \eta^2(x) (W(x) + C \1_{B_3}(x)) \Big]^{\frac d{2s}} (1+ |\ln |x||)^{\frac{d-s}{s}}\,{\rm d} x \\
	& \lesssim_{s,d} 1+ \int_{\R^d} W(x)^{\frac d{2s}} (1+ |\ln |x||)^{\frac{d-s}{s}}\,{\rm d} x.
		\end{align*}
The proof of Theorem \ref{thm:fractional} is complete.
\qed

%%%%%%%%%%%%%%%%%%%%%%%%%%%%%%%%%%%%%%%%
%%%%%%%%%%%%%%%%%%%%%%%%%%%%%%%%%%%%%%%%
%%%%%%%%%%%%%%%%%%%%%%%%%%%%%%%%%%%%%%%%
%%%%%%%%%%%%%%%%%%%%%%%%%%%%%%%%%%%%%%%%

\subsection*{Acknowledgements} We thank the referees for helpful suggestions. 
Partial support through the Deutsche For\-schungs\-gemeinschaft (DFG, German Research Foundation) Germany’s Excellence Strategy EXC - 2111 - 390814868 and through  TRR 352 -- Project-ID 470903074 (G.K. Duong, R.L. Frank, P.T. Nam),  through the U.S. National Science Foundation grant DMS-1954995 (R.L. Frank), and through the Czech Science Foundation Project GA22-17403S (T.M.T. Le, P.T. Nam, P.T. Nguyen) is acknowledged.

%%%%%%%%%%%%%%%%%%%%%%%%%%%%%%%%%%%%%%%%
%%%%%%%%%%%%%%%%%%%%%%%%%%%%%%%%%%%%%%%%
%%%%%%%%%%%%%%%%%%%%%%%%%%%%%%%%%%%%%%%%
%%%%%%%%%%%%%%%%%%%%%%%%%%%%%%%%%%%%%%%%

\end{document}